\renewcommand{\epsilon}{\varepsilon}
\renewcommand{\phi}{\varphi}
\newcommand{\C}{\mathbb{C}}
\newcommand{\R}{\mathbb{R}}
\newcommand{\M}{\mathcal{M}}
\DeclareMathOperator{\Tr}{Tr}
\newcommand{\ketbra}[2]{|#1\rangle\langle#2|} 
\newtheorem{theorem}{Theorem}[section]
\newtheorem{definition}[theorem]{Definition}
\newtheorem{proposition}[theorem]{Proposition}
\newtheorem{corollary}[theorem]{Corollary}
\newtheorem{lemma}[theorem]{Lemma}
\newtheorem{remark}[theorem]{Remark}
\newtheorem{example}[theorem]{Example}
\DeclareMathAlphabet\mathbfcal{OMS}{cmsy}{b}{n}
\begin{document}

	\title{The compatibility dimension of quantum measurements}

\author{Faedi Loulidi}
	
\author{Ion Nechita}
\email{$\{$loulidi,nechita$\}$@irsamc.ups-tlse.fr}
\address{Laboratoire de Physique Th\'eorique, Universit\'e de Toulouse, CNRS, UPS, France}	
	
	\begin{abstract}
		We introduce the notion of compatibility dimension for a set of quantum measurements: it is the largest dimension of a Hilbert space on which the given measurements are compatible. In the Schr\"odinger picture, this notion corresponds to testing compatibility with ensembles of quantum states supported on a subspace, using the incompatibility witnesses of Carmeli, Heinosaari, and Toigo. We provide several bounds for the compatibility dimension, using approximate quantum cloning or algebraic techniques inspired by quantum error correction. We analyze in detail the case of two orthonormal bases, and, in particular, that of mutually unbiased bases. 
	\end{abstract}
	
	\date{\today}
	
	\maketitle
	
	\section{Introduction}

	The process of measurement in quantum mechanics has many properties differentiating it from what one encounters in classical theories. First of all, Born's rule states that the outcome of a quantum measurement is probabilistic, quantum theory predicting only the probability distribution of possible outcomes. Heisenberg's uncertainty principle gives a lower bound on the joint precision with which values can be attributed to general quantum observables. Closely related to the latter is the notion of \emph{quantum incompatibility}: there exist quantum measurements that cannot be performed simultaneously on an unknown quantum state. Incompatibility of quantum measurements has received a lot of attention from both theorists (as a signature of quantumness) and experimentalists (mainly due to the relation to Bell non-locality \cite{fine1982hidden,wolf2009measurements,brunner2014bell}).

	For a pair of incompatible quantum measurements, it is well known that adding enough noise renders them compatible \cite{busch1996quantum,busch2013comparing}. This has been a very fruitful direction of research, see the recent review \cite{designolle2019incompatibility} and the connection to free spectrahedra \cite{bluhm2018joint,bluhm2020compatibility}. In this work, we study a different approach to the same problem of making measurements compatible, by \emph{dimension reduction}. This can be understood in two equivalent ways: 
	\begin{itemize}
		\item taking corners of the POVM elements (Heisenberg picture)
		\item restricting the sets of quantum states to a subspace (Schr\"odinger picture).
	\end{itemize}
	
	We introduce a measure of incompatibility of measurements from this perspective: the \emph{compatibility dimension} of a tuple of POVMs $A^{(1)}, \ldots, A^{(g)}$ is the largest Hilbert space dimension $r$ for which there \emph{exists} an isometry $V : \C^r \to \C^d$ such that the reduced POVMs $V^*A^{(1)}V, \ldots, V^* A^{(g)}V$ are compatible, see Definition \ref{def:R-Rbar}. Similarly, we define the \emph{strong compatibility dimension} of a tuple of measurements as the largest dimension $r$ for which \emph{all} isometries $V:\C^r \to \C^d$ reduce the POVMs to a compatible tuple.
	
    We study different examples and fundamental properties of these newly defined quantities. Using analytic and algebraic techniques, we prove several bounds in the most relevant cases. For the case of two von Neumann measurements, we relate the compatibility dimension to a geometric quantity encoding the relative position of the vectors of the two bases. For two noisy mutually unbiased bases, we show that, for some particular values of the noise parameters, dimensionality reduction renders incompatible measurements compatible. To do so, we prove along the way a generalization of a compatibility criterion \cite{heinosaari2014maximally} coming from quantum cloning. 
    We relate these dimensions to the notion of incompatibility witnesses introduced in \cite{carmeli2018state,carmeli2019quantum}, using the measurement / state duality. We use algebraic techniques inspired from the theory of quantum error correction to prove very general lower bounds on the compatibility dimension. Finally, we consider spin systems coming from Clifford algebras as an illuminating example. 
    
    The newly introduced measure, the \emph{compatibility dimension} of a tuple of quantum measurements, sheds light on the complex phenomenon of quantum incompatibility. It is a discrete measure of incompatibility: compatible POVMs have maximal compatibility dimension (equal to that of the ambient Hilbert space), while smaller compatibility dimensions indicate a higher robustness of incompatibility. We provide a plethora of results regarding this measure, of both analytical and algebraic flavor, focusing on important classes of POVMs, such as noisy mutually unbiased von Neumann measurements. We leave a certain number of questions regarding the compatibility dimension open, and hope that our work will stimulate further research in this direction. 
    
    Our paper is organized as follows. In Section \ref{sec:compatibility} we recall the main definitions and the basic properties of quantum measurements, focusing on the notion of compatibility. We present in Section \ref{sec:cloning} a generalization of a compatibility criterion using asymmetric cloning. Section \ref{sec:comp-dim} contains the main definitions of the paper, that of the (strong) compatibility dimension. We switch to the Schr\"odinger picture in Section \ref{sec:IW}, relating the compatibility dimension to incompatibility witnesses and discrimination of state super-ensembles. Sections \ref{sec:2-bases} and \ref{sec:MUB} are devoted to two important examples: von Neumann measurements and (noisy) mutually unbiased bases. In Section \ref{sec:algebraic} we use techniques inspired by quantum error correction to provide very general lower bounds for the compatibility dimension. Finally, we study spin systems in Section \ref{sec:spin-systems}, obtaining lower bounds for the strong compatibility dimension. We conclude with a list of open questions and directions for further research. 
	
\section{Compatibility of quantum measurements}\label{sec:compatibility}

We gather in this section the main definitions and basic facts from the theory of quantum measurements. In quantum mechanics, to quantum systems we associate a complex Hilbert space $\mathcal H$. In this paper, we shall focus on finite dimensional Hilbert spaces, so we shall write $\mathcal H \cong \C^d$ for a positive integer $d$, the number of degrees of freedom of the quantum system. We denote by $\mathcal M_d$ the vector space of $d \times d$ complex matrices.  The \emph{states} of a quantum system are mathematically modelled by \emph{density matrices}
$$\M^{1,+}_d := \{ \rho \in \M_d \, : \, \rho \geq 0 \text{ and } \Tr \rho = 1\},$$
where $\rho \geq 0$ means that the matrix $\rho$ is positive semidefinite (i.e.~$\rho$ is self-adjoint and has non-negative eigenvalues). 

The measurement process is modelled in quantum mechanics by \emph{observables}. This formalism allows to obtain the probability distribution of the possible outcomes, as well as the state of the system after the measurement (the \emph{wave function collapse}). In this work, we are interested in the probabilities of outcomes only, so we shall use the framework of POVMs. We write $[n]:= \{1, 2, \ldots, n\}$.

\begin{definition}
    A \emph{positive operator valued measure} (POVM) on $\M_d$ is a tuple $A=(A_1, \ldots, A_k)$ of self-adjoint operators from $\M_d$ which are positive semidefinite and sum up to the identity:
    $$\forall i \in [k], \quad A_i \geq 0 \qquad \text{ and } \qquad \sum_{i=1}^k A_i = I_d.$$
    When measuring a POVM $A$ on a quantum system in state $\rho$, we obtain a random outcome
    $$\forall i \in [k], \qquad \mathbb P(\text{outcome} = i) = \Tr[\rho A_i].$$
\end{definition}

The properties of the POVM operators $A_i$ (called \emph{quantum effects}) ensure that the vector $\left(\Tr[\rho A_i]\right)_{i=1}^k$ is a probability vector. Note that this mathematical formalism does not account for what happens with the quantum particle after the measurement; we say that the particle is destroyed in the process of measurement, see Figure \ref{fig:measurement}. 

\begin{figure}[htb!]
    \centering
    \includegraphics{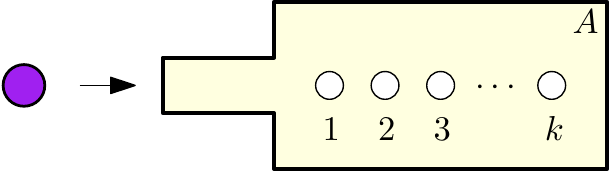}\qquad\qquad\qquad\qquad 
    \includegraphics{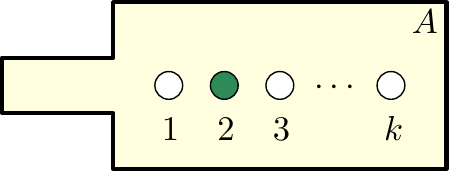}
    \caption{Diagrammatic representation of a quantum measurement. Left: a quantum particle enters a measurement apparatus. Right: after the measurement is performed, the particle is destroyed, and the apparatus displays the classical outcome (here, $2$).}
    \label{fig:measurement}
\end{figure}

An important class of POVMs are \emph{von Neumann measurements}, where $A_i = \ketbra{a_i}{a_i}$, $i \in [d]$, for an orthonormal basis $\{\ket{a_i}\}_{i=1}^d$ of $\C^d$. On the other side of the spectrum, there are \emph{trivial} POVMs, where $B_j = q_j I_d$, for some probability vector $q=(q_1, \ldots, q_k)$. Note that for trivial POVMs, the outcome probabilities are given by the vector $q$, independently of the quantum state $\rho$ that is being measured. The special case of equi-probability $q_j = 1/k$ will be of interest in this paper: we define the notion of noisy POVMs, with respect to the \emph{random} or \emph{uniform} noise model (see \cite{designolle2019incompatibility}). 
\begin{definition}
For a POVM $A$ and a parameter $t \in [0,1]$, we define the \emph{noisy version} $\mathcal N_t[A]$ of $A$ by
$$\mathcal N_t[A]_i = tA_i + (1-t)\frac{I_d}{k},$$
where $k$ is the number of outcomes of $A$. In other words, $\mathcal N_t[A]$ is the convex combination, with weight $t$, between $A$ and the uniform trivial POVM $(I_d/k,\ldots, I_d/k)$.

Similarly, for $g$-tuples of POVMs $\mathbf A = (A^{(1)}, \ldots, A^{(g)})$, we define
$$\mathcal N_{\mathbf t}[\mathbf A] = (\mathcal N_{t_1}[A^{(1)}], \ldots, \mathcal N_{t_g}[A^{(g)}]),$$
for a vector $\mathbf t \in [0,1]^g$. If the vector $\mathbf  t$ is constant, $\mathbf t = (t, t, \ldots, t)$, we write $\mathcal N_{t}[\mathbf A] := \mathcal N_{\mathbf t}[\mathbf A]$.
\end{definition}
Note that in the definition above, we allow POVMs having possibly different number of outcomes. 

Of central importance in this work will be the following notion. 
\begin{definition}\label{def:reduced-POVMs}
    Given an isometry $V : \mathbb C^r \to \mathbb C^d$ and a POVM $A = (A_1, \ldots, A_k)$ on $\mathcal M_d$, we define the \emph{reduced POVM} on $\mathcal M_r$ 
    $$V^*AV := (V^*A_1V, \ldots, V^*A_kV).$$
\end{definition}
We record here the following result, which will be used later in the paper. 
\begin{lemma}\label{lem:noise-isometry}
For a POVM $A$ on $\mathcal M_d$ and an isometry $V : \mathbb C^r \to \mathbb C^d$, we have 
$$V^* \mathcal N_t[A] V = \mathcal N_t[V^*AV].$$
\end{lemma}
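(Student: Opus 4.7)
The plan is to verify the identity componentwise by a direct calculation, exploiting the single nontrivial property of an isometry: $V^*V = I_r$.

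First, I would fix an index $i \in [k]$ and unfold the left-hand side using Definition \ref{def:reduced-POVMs} together with the definition of $\mathcal N_t$:
\begin{equation*}
(V^* \mathcal N_t[A] V)_i = V^* \bigl( t A_i + (1-t)\tfrac{I_d}{k} \bigr) V = t\, V^* A_i V + (1-t)\tfrac{V^* I_d V}{k}.
\end{equation*}
The only step that requires anything beyond linearity is the simplification $V^* I_d V = V^* V = I_r$, which is exactly the isometry condition.

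Next, I would expand the right-hand side. Since $V^*AV$ is a POVM on $\mathcal M_r$ with the same number of outcomes $k$ as $A$, Definition of $\mathcal N_t$ gives
\begin{equation*}
\mathcal N_t[V^*AV]_i = t\, V^* A_i V + (1-t)\tfrac{I_r}{k}.
\end{equation*}
Comparing the two expressions shows they coincide for every $i \in [k]$, proving the lemma.

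There is essentially no obstacle: the content of the statement is that the noise model is equivariant under reduction by an isometry, and this reduces immediately to $V^*V = I_r$. The lemma is worth stating separately only because it will be invoked repeatedly later in the paper when comparing noisy reductions of POVMs to reductions of noisy POVMs.
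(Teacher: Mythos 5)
Your proof is correct and is essentially the same as the paper's: both expand $\mathcal N_t[A]_i$ componentwise and reduce everything to the identity $V^*I_dV = V^*V = I_r$. Nothing further is needed.
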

\begin{proof}
This simple fact follows from the special type of noise we use: 
$$V^*\mathcal N_t[A]_i V = tV^*A_iV + (1-t)\frac{V^*I_dV}{k} = tV^*A_iV + (1-t)\frac{I_r}{k} = \mathcal N_t[V^*AV]_i.$$
\end{proof}

\medskip

We introduce now the notion of \emph{compatibility} for POVMs, which is central to this paper. Physically, this notion is motivated by the following scenario. Suppose we want to measure two different physical quantities (modelled by two POVMs $A$ and $B$) on a given quantum particle in a state $\rho$. Since the particle is destroyed after performing a given measurement, we cannot measure simultaneously $A$ and $B$. However, measuring $A$ and $B$ on $\rho$ can be simulated by measuring a different POVM $C$, and then \emph{classically} post-processing the output of $C$ to a pair of outcomes $(i,j)$ for $A$, respectively $B$, see Figure \ref{fig:compatibility}. Famously, there are pairs of POVMs $A$ and $B$ for which there is no such $C$, like the position and momentum operators of a particle in one dimension: it is impossible to attribute an exact value to both position and momentum observables at the same time. 

\begin{figure}[htb!]
    \centering
    \includegraphics[width=0.9\textwidth]{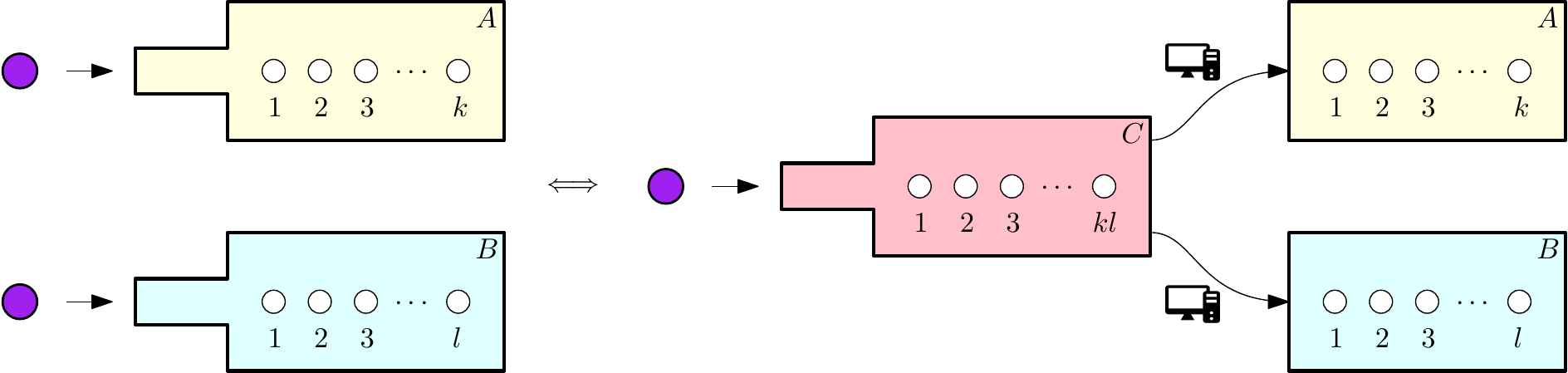}
    \caption{The simultaneous measurement of $A$ and $B$ is simulated by the measurement of $C$ on a single copy of the quantum particle, followed by a classical post-processing of the output of $C$.}
    \label{fig:compatibility}
\end{figure}

Mathematically, we have the following important definition, see, e.g., the excellent review paper \cite{heinosaari2016invitation}.

\begin{definition}
    Two POVMs $A=(A_1, \ldots, A_k)$, $B=(B_1, \ldots, B_l)$ on $\M_d$ are called \emph{compatible} if there exists a POVM $C=(C_{11}, \ldots, C_{kl})$ on $\M_d$ such that $A$ and $B$ are its respective \emph{marginals}:
    \begin{align*}
        \forall i \in [k], \qquad A_i &= \sum_{j=1}^l C_{ij}\\
        \forall j \in [l], \qquad B_j &= \sum_{i=1}^k C_{ij}.
    \end{align*}
    If this is the case, the POVM $C$ is called a \emph{joint measurement} of $A$ and $B$. 

    More generally, a $g$-tuple of POVMs $\mathbf A = (A^{(1)}, \ldots, A^{(g)})$ is called compatible if there exists a POVM $C$ with outcome set $[k_1] \times \cdots \times [k_g]$ such that, for all $x \in [g]$, the POVM $A^{(x)}$ is the $x$-th marginal of $C$: 
    \begin{align*}
        \forall i_x \in [k_x], \qquad A^{(x)}_{i_x} &= \sum_{i_1 = 1}^{k_1} \cdots \sum_{i_{x-1} = 1}^{k_{x-1}}\sum_{i_{x+1} = 1}^{k_{x+1}} \cdots \sum_{i_g = 1}^{k_g} C_{i_1i_2 \cdots i_g}\\
        &= \sum_{\substack{\mathbf j \in [k_1] \times \cdots \times [k_g]\\ j_x = i_x}} C_{\mathbf j}.
    \end{align*}
\end{definition}

There is a lot of literature about the compatibility relation for quantum measurements, see \cite{heinosaari2016invitation}. Let us just mention here that in the case of two POVMs $A, B$ where at least one of them is projective (i.e.~the effect operators are projections), compatibility is equivalent to commutativity $[A_i,B_j] = 0$, for all $(i,j) \in [k]\times[l]$, see \cite[Proposition 8]{heinosaari2008notes}.

Given a pair of incompatible POVMs $A$ and $B$, it is always possible to render them compatible by mixing in some noise: 
$$\forall A,B \text{ POVMs}, \quad \mathcal N_{1/2}[A] \text{ and } \mathcal N_{1/2}[B] \text{ are compatible.}$$
Whether smaller amounts of noise suffice to render arbitrary POVMs compatible \cite{busch2013comparing} is a very important ongoing research question, see \cite{designolle2019incompatibility} for a recent review, and \cite{bluhm2018joint,bluhm2020compatibility} for a novel approach based on free spectrahedra. In this work, we introduce and study a different method of achieving compatibility of POVMs: instead of mixing in noise, we \emph{reduce their dimension}. 

\section{Compatibility criteria from asymmetric cloning}\label{sec:cloning}

We present now a generalization of the compatibility criterion from \cite{heinosaari2014maximally} to the case of several POVMs and asymmetric noise parameters. We obtain a necessary condition for the compatibility of a tuple of POVMs, which is in a sense dual to the asymmetric cloning problem. 

First, let us recall some basic facts about (asymmetric) cloning. It was shown that in quantum mechanics we cannot make exact copies of an arbitrary unknown quantum state \cite{wootters1982single}. This fact was formulated as \emph{the no-cloning theorem}, which is one of the fundamental differences between the classical and the quantum worlds. To precisely state a quantitative version of this fundamental fact, let us recall the basic definitions of completely positive maps and quantum channels; we refer the reader interested in background material on quantum information theory to the monograph \cite{watrous2018theory}.

	\begin{definition} A linear map $\Phi:\M_d \to \M_D$ is called \emph{completely positive} if for all $K\geq1$ and $X\in\M_d\otimes\M_K$, we have
    $$X\geq 0 \implies [\Phi\otimes \mathrm{id}_K](X)\geq 0,$$
    where $\mathrm{id}_K$ denotes the identity map. If, moreover, the map $\Phi$ is trace preserving
    $$\forall Y \in \M_d, \qquad \Tr \Phi(Y) = \Tr Y,$$
    then $\Phi$ is called a \emph{quantum channel}. 
\end{definition}

The no-cloning theorem can be precisely formulated as follows: for any number of clones $g \geq 2$, there is no quantum channel $\Phi : \M_d \to \M_d^{\otimes g}$ with the property that
$$\forall \rho \in \M_d^{1,+}, \, \forall j \in [g], \qquad \Tr_{[g] \setminus \{j\}}\Phi(\rho) = \rho.$$
The relation above means that there is no universal $1 \to g$ quantum cloner such that the $j$-th marginal of the output is equal to the input, for all $j \in [g]$. 

The \emph{asymmetric quantum approximate cloning} problem asks whether a quantum channel exists which \emph{approximately} clones any input state. The degree of approximation can vary with the index of the marginal (i.e.~clone) in the asymmetric setting. Symmetric approximate cloning was completely described in \cite{werner1998optimal,keyl1999optimal} (using different figures of merit for the quality of the clones), while the asymmetric case was studied in \cite{studzinski2014group, kay2016optimal}. Physically, approximate cloning can be seen as a way to go around the obstruction from the no-cloning theorem by adding noise: our goal is to produce imperfect, noisy copies of the original input state. We formalize the above in the following definition (see also \cite{bluhm2018joint}). 
	
\begin{definition} 
The approximation parameters of physical $1 \to g$ asymmetric cloners on $\C^d$ are described by the following set:
\begin{align*}
    \Gamma^{clone}(g,d)&:=\Big\{\mathbf s\in[0,1]^g \, : \, \exists \Phi: \M_d \to \M_d^{\otimes g} \text{ quantum channel such that}
	\\
	&\forall \rho \in \M_d,\forall j \in [g],\quad \Tr_{[g] \setminus \{j\}}\Phi(\rho)=s_j \rho +(1-s_j)\frac{I_d}{d} \Big\}.
\end{align*}
\end{definition}	

The classical no-cloning theorem states that perfect clones are impossible: for all $g,d \geq 2$, $(1,1, \ldots, 1) \notin \Gamma^{clone}(g,d)$. In \cite{kay2016optimal},  the \emph{optimal asymmetric cloning parameters} were computed explicitly (see also \cite{studzinski2014group} for an alternative approach, based on representation theory). Those results, stated in term of fidelities, can be restated in our language of depolarizing channels using \cite[Proposition 6.5]{bluhm2020compatibility}, which uses the \emph{twirling} operation to symmetrize the marginals of an optimal cloner. 

\begin{theorem}\cite[Section 2.3, Theorem 1]{kay2016optimal}
For all $g,d\geq 2$, the \emph{optimal} asymmetric cloning parameters are given by
	\begin{align*}
	    \partial\Gamma^{clone}(g,d)&=\{\mathbf s\in (0,1]^g \, : \forall \epsilon > 0, \, (1+\epsilon)s \notin \Gamma^{clone}(g,d)\,\} \\
	    &= \Bigg\{\mathbf s\in(0,1]^g \, :\,  (g+d-1)\left[g-d^2+d+(d^2-1)\sum^g_{i=1} s_i\right]  = \\
	    &\qquad \qquad \qquad \qquad \left(\sum^g_{i=1}\sqrt{s_i(d^2-1)+1}\right)^2\Bigg\}.
	\end{align*}
	\end{theorem}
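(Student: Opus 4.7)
The plan is to deduce this characterization from Kay's optimality theorem \cite{kay2016optimal} for asymmetric cloning (originally phrased in terms of fidelities) in two steps: (i) reduce to cloners whose marginals are exactly depolarizing channels via twirling, and (ii) translate between the fidelity parameterization and the depolarizing-parameter parameterization defining $\Gamma^{clone}(g,d)$.

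For step (i), given a cloner $\Phi : \M_d \to \M_d^{\otimes g}$ realizing some vector of fidelities, I would pass to the twirled channel
$$\tilde\Phi(X) := \int_{\U(d)} (U^{\otimes g})^\ast \, \Phi(U X U^\ast) \, U^{\otimes g} \, dU,$$
with respect to the Haar measure on the unitary group. Each marginal of $\tilde\Phi$ is $\U(d)$-covariant and is therefore a depolarizing channel by Schur's lemma; its depolarizing parameter is controlled by the average input-output fidelity of the corresponding marginal of $\Phi$, which is preserved by averaging. This is precisely the content of \cite[Proposition 6.5]{bluhm2020compatibility}. In particular, an optimal cloner in Kay's fidelity sense yields, after twirling, a cloner with marginals $\rho \mapsto s_j \rho + (1-s_j) I_d/d$ for appropriate parameters $s_j$ determined by the original fidelities, showing that the extremal points of $\Gamma^{clone}(g,d)$ are in bijection with Kay's optimal fidelity vectors.

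For step (ii), I would use the dictionary between the parameter $s$ of a depolarizing channel $\rho\mapsto s\rho+(1-s)I_d/d$ and its entanglement fidelity $F_e = (s(d^2-1)+1)/d^2$. This gives an affine bijection $s_j \leftrightarrow F_{e,j}$ together with the convenient identity $s_j(d^2-1)+1 = d^2 F_{e,j}$, which immediately identifies $\bigl(\sum_j \sqrt{s_j(d^2-1)+1}\bigr)^2$ with $d^2 \bigl(\sum_j \sqrt{F_{e,j}}\bigr)^2$. Substituting $s_j = (d^2 F_{e,j}-1)/(d^2-1)$ into the factor $g - d^2 + d + (d^2-1)\sum_j s_j$ collapses it to $d\bigl(\sum_j F_{e,j} - (d-1)/d\bigr) \cdot d$-style terms; after clearing the overall $d^2$, Kay's boundary relation among the $F_{e,j}$ becomes exactly the claimed identity.

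The main obstacle is purely algebraic bookkeeping in step (ii): the cloning literature uses several non-equivalent fidelity conventions (single-clone fidelity, entanglement fidelity, shrinking factor), and one has to fix the dictionary once and for all, verify Kay's formula in that convention, and then track the constants $(g+d-1)$, $(d^2-1)$, and $g - d^2 + d$ through the substitution. A minor additional point is checking that the twirling step preserves the boundary property, which holds because the depolarizing parameter of each twirled marginal equals the average fidelity of the original marginal and is therefore an affine, hence monotone, functional of $\Phi$. Granting this dictionary, the stated closed form is equivalent to Kay's theorem combined with \cite[Proposition 6.5]{bluhm2020compatibility}.
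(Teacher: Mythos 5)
Your proposal follows essentially the same route as the paper: the theorem is quoted from \cite{kay2016optimal}, and the paper likewise obtains the depolarizing-parameter form by symmetrizing the marginals of an optimal cloner through the twirling argument of \cite[Proposition 6.5]{bluhm2020compatibility}. The fidelity-convention bookkeeping you defer is exactly what the paper delegates to those references, so no further argument is needed.
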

	
The task of cloning quantum states can be reinterpreted in the Heisenberg picture of quantum mechanics by looking at the dual map of a channel; this operation acts naturally on quantum measurements. In this picture, the dual property of producing imperfect clones is having noisy measurements. Let us define the asymmetric dual map for the POVMs, and the corresponding set of cloning parameters. Consider the set of parameters for this dual maps: 
\begin{align}
    \label{eq:def-tilde-Gamma-clone}\tilde{\Gamma}^{clone}(g,d)&:=\Big\{\mathbf s\in[0,1]^g \, : \, \exists \Psi: \M_d^{\otimes g}\to \M_d \text{ unital and completely positive such that}\\
	\nonumber&\forall X\in \M_d,\, \forall j \in [g],\quad \Psi(I^{\otimes (j-1)}\otimes X \otimes I^{\otimes (g-j)})=s_j X + (1-s_j)\frac{\Tr X }{d}I\Big\}.
\end{align}

\begin{proposition}
    The dual and the primal sets of cloning parameters are identical: $\forall g,d \geq 2$,   
    $$\tilde{\Gamma}^{clone}(g,d)=\Gamma^{clone}(g,d).$$
\end{proposition}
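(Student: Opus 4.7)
The plan is to use Hilbert--Schmidt duality between quantum channels and unital completely positive maps. Recall that if $\Phi:\M_d\to\M_d^{\otimes g}$ is linear, its adjoint $\Phi^\ast:\M_d^{\otimes g}\to\M_d$ is defined by $\Tr[\Phi^\ast(Y)X]=\Tr[Y\Phi(X)]$ for all $X\in\M_d$ and $Y\in\M_d^{\otimes g}$. A standard fact is that $\Phi$ is completely positive iff $\Phi^\ast$ is, $\Phi$ is trace preserving iff $\Phi^\ast$ is unital, and $\Phi^{\ast\ast}=\Phi$. This immediately sets up a bijection between quantum channels $\M_d\to\M_d^{\otimes g}$ and unital CP maps $\M_d^{\otimes g}\to\M_d$, so the only thing to verify is that the marginal condition in the definition of $\Gamma^{clone}(g,d)$ is equivalent to the tensor-factor condition in $\tilde\Gamma^{clone}(g,d)$ under this duality.

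To see this, fix $j\in[g]$, $X\in\M_d$ and $\rho\in\M_d$. By definition of the partial trace,
\begin{align*}
\Tr\!\bigl[X\cdot\Tr_{[g]\setminus\{j\}}\Phi(\rho)\bigr] &= \Tr\!\bigl[(I^{\otimes(j-1)}\otimes X\otimes I^{\otimes(g-j)})\,\Phi(\rho)\bigr] \\
&= \Tr\!\bigl[\Phi^\ast(I^{\otimes(j-1)}\otimes X\otimes I^{\otimes(g-j)})\,\rho\bigr].
\end{align*}
On the other hand, the depolarizing target satisfies
$$\Tr\!\bigl[X\bigl(s_j\rho+(1-s_j)I_d/d\bigr)\bigr]=\Tr\!\Bigl[\bigl(s_j X+(1-s_j)\tfrac{\Tr X}{d}I_d\bigr)\rho\Bigr].$$
Therefore the identity $\Tr_{[g]\setminus\{j\}}\Phi(\rho)=s_j\rho+(1-s_j)I_d/d$ holds for all $\rho$ iff, for all $X$,
$$\Phi^\ast(I^{\otimes(j-1)}\otimes X\otimes I^{\otimes(g-j)})=s_j X+(1-s_j)\tfrac{\Tr X}{d}I_d,$$
which is precisely the condition appearing in \eqref{eq:def-tilde-Gamma-clone}.

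Combining the two facts, $\Phi$ is a quantum channel realizing $\mathbf s\in\Gamma^{clone}(g,d)$ if and only if $\Psi:=\Phi^\ast$ is a unital CP map realizing $\mathbf s\in\tilde\Gamma^{clone}(g,d)$. Both inclusions $\Gamma^{clone}(g,d)\subseteq\tilde\Gamma^{clone}(g,d)$ and $\tilde\Gamma^{clone}(g,d)\subseteq\Gamma^{clone}(g,d)$ follow, giving the equality. There is no real obstacle here: the only point worth a moment of care is the book-keeping with the identity blocks when pairing the partial trace with a single-factor operator $X$, but this is immediate from $\Tr_A(M)=\sum_i(\langle i|_A\otimes I_B)M(|i\rangle_A\otimes I_B)$ together with cyclicity of the trace.
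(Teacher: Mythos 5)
Your argument is correct and is essentially the paper's proof: the paper also passes to the adjoint map ($\Phi=\Psi^\ast$, using that unital CP maps and channels are dual) and performs the very same trace-pairing computation, proving one inclusion explicitly and noting the other is symmetric. Your packaging as a single iff via $\Phi^{\ast\ast}=\Phi$ is just a slightly more economical way of stating the same argument.
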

\begin{proof}
Let us prove the first inclusion $\tilde{\Gamma}^{clone}(g,d)\subseteq\Gamma^{clone}(g,d)$, the other one being similar. Let $\mathbf s\in\tilde{\Gamma}^{clone}(g,d)$, and consider the unital completely positive map $\Psi:\M_d^{\otimes g}\to \M_d$ having the tuple $\mathbf s$ as an approximation parameter. Let us define $\Phi:=\Psi^*$; since $\Psi$ is unital and completely positive, $\Phi$ is a quantum channel \cite[Section 2.2]{watrous2018theory}. For any quantum state $\rho \in \M_d^{1,+}$, any matrix $X \in \M_d$, and any $j \in [g]$, we have 
\begin{align*}
    \Tr \left[ \left(\Tr_{[g] \setminus \{j\}} \Phi(\rho) \right)\cdot X\right] &=  \Tr \left[ \Phi(\rho) \cdot \left(I_d^{\otimes(j-1)} \otimes X \otimes I_d^{\otimes (g-j)}\right)\right]\\
    &=\Tr \left[ \rho \cdot \Psi\left(I_d^{\otimes(j-1)} \otimes X \otimes I_d^{\otimes (g-j)}\right)\right]\\
    &=\Tr \left[ \rho \cdot \left(s_j X + (1-s_j) \frac{\Tr X}{d}I_d \right)\right]\\
    &= s_j \Tr[\rho X] + (1-s_j) \frac{\Tr X}{d}\\
    &= \Tr \left[ \left(s_j \rho + (1-s_j)\frac{I_d}{d} \right)\cdot X\right],
\end{align*}
proving that, for all $\rho$ and $j$, $\Tr_{[g] \setminus \{j\}} \Phi(\rho) = s_j \rho + (1-s_j)\frac{I_d}{d}$. Hence, $\Phi=\Psi^*$ is a valid quantum cloner with parameter $\mathbf s$, which finishes the proof.
\end{proof}
    
\medskip

We shall now use the above results on quantum cloning to generalize the following compatibility criterion. We denote by  $\lambda_{\min}(X)$ the minimal eigenvalue of a self-adjoint operator $X$.
    
\begin{proposition}{\cite[Proposition III.3]{heinosaari2020random}}\label{prop:clone}
	Consider two POVMs $A$ and $B$ on $\M_d$ satisfying
	\begin{align*}
	    \lambda_{\min}(A_i) &\geq \frac{1}{2(d+1)} \Tr A_i \qquad \forall i \\
	    \lambda_{\min}(B_j) &\geq \frac{1}{2(d+1)} \Tr B_j \qquad \forall j.
	\end{align*}
	Then, $A$ and $B$ are compatible. 
\end{proposition}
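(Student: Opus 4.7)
The plan is to invoke the asymmetric cloning results in the symmetric case $g=2$, $s_1=s_2=s$, and use the duality between cloners and joint measurements established in the preceding proposition. The key numerical fact is that the threshold $1/(2(d+1))$ appearing in the hypothesis coincides exactly with the noise level $(1-s)/d$ of the optimal symmetric $1\to 2$ cloner, whose parameter is $s = (d+2)/(2(d+1))$.

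First I would check that $(s,s)$ lies in $\tilde{\Gamma}^{clone}(2,d)$ for $s = (d+2)/(2(d+1))$. Plugging $g=2$, $s_1=s_2=s$ into the boundary equation of the preceding theorem, this reduces after simplification to the polynomial identity $(d-1)^2(d+2) = (d-2)(d+1)^2 + 4$, which is routine to verify. Hence there exists a unital completely positive map $\Psi : \M_d \otimes \M_d \to \M_d$ satisfying $\Psi(X \otimes I_d) = sX + (1-s)(\Tr X)/d \cdot I_d$ and $\Psi(I_d \otimes Y) = sY + (1-s)(\Tr Y)/d \cdot I_d$.

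Next, using the identity $(1-s)/d = 1/(2(d+1))$, I would introduce the renormalized operators $\tilde A_i := s^{-1}\bigl(A_i - (1-s)(\Tr A_i)/d \cdot I_d\bigr)$ and analogously $\tilde B_j$. The hypothesis $\lambda_{\min}(A_i) \geq \Tr A_i/(2(d+1))$ guarantees $\tilde A_i \geq 0$, while $\sum_i A_i = I_d$ gives $\sum_i \tilde A_i = I_d$, so $\tilde A$ is a POVM; taking traces also yields $\Tr \tilde A_i = \Tr A_i$, and the same holds for $\tilde B$.

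Finally, I would set $C_{ij} := \Psi(\tilde A_i \otimes \tilde B_j)$. Complete positivity of $\Psi$ gives $C_{ij} \geq 0$, unitality gives $\sum_{i,j} C_{ij} = \Psi(I_d \otimes I_d) = I_d$, and the marginals compute directly as
$$\sum_j C_{ij} = \Psi(\tilde A_i \otimes I_d) = s\tilde A_i + (1-s)\frac{\Tr \tilde A_i}{d} I_d = A_i,$$
and symmetrically $\sum_i C_{ij} = B_j$. Thus $C$ is a joint measurement of $A$ and $B$. The only genuinely nontrivial step is recognizing the numerical coincidence between the eigenvalue threshold in the hypothesis and the optimal symmetric $1 \to 2$ cloning parameter; once this is seen, the whole construction is essentially dictated by the Heisenberg-picture duality of the preceding proposition.
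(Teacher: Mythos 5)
Your proposal is correct and follows essentially the same route as the paper: the paper derives Proposition \ref{prop:clone} from Theorem \ref{thm:asym-cloning-criterion} by specializing to $g=2$ and the symmetric optimal cloning parameter $s=\frac{d+2}{2(d+1)}\,$, which is exactly the construction (duality $\tilde\Gamma^{clone}=\Gamma^{clone}$, renormalized POVMs, and $C_{ij}=\Psi(\tilde A_i\otimes\tilde B_j)$) that you write out. Your numerical verifications, including $(1-s)/d = 1/(2(d+1))$ and the boundary identity, check out.
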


We provide next a generalization of the compatibility criterion above for $g$-tuples of POVMs and asymmetric noise parameters.

\begin{theorem}\label{thm:asym-cloning-criterion}
	Let $\mathbf A = (A^{(1)}, \ldots, A^{(g)})$ be a $g$-tuple of POVMs on $\M_d$ having, respectively, $k_1, \ldots, k_g$ outcomes. Define, for all $x \in [g]$, 
	$$s_x := 1 - \min_{i \in [k_x]} \frac{d\lambda_{\min}(A^{(x)}_i)}{\Tr A^{(x)}_i} \in [0,1].$$
	If  $\mathbf s \in \Gamma^{clone}(g,d)$, then the POVMs in $\mathbf A$ are compatible. 
\end{theorem}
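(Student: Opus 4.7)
The plan is to explicitly construct a joint POVM $C$ for $\mathbf A$, using a dual asymmetric cloner associated with the parameter tuple $\mathbf s$. Since the proposition preceding the theorem gives $\Gamma^{clone}(g,d)=\tilde\Gamma^{clone}(g,d)$, the hypothesis $\mathbf s\in\Gamma^{clone}(g,d)$ produces a unital completely positive map $\Psi:\M_d^{\otimes g}\to\M_d$ satisfying the marginal relation in \eqref{eq:def-tilde-Gamma-clone}. This $\Psi$ will do all the work.

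First, I would use the definition of $s_x$ to invert the noise. The bound $\lambda_{\min}(A^{(x)}_{i_x})\geq \frac{1-s_x}{d}\Tr A^{(x)}_{i_x}$ means that $A^{(x)}_{i_x}-\frac{1-s_x}{d}(\Tr A^{(x)}_{i_x})\,I\geq 0$, so (assuming $s_x>0$) one can define
\[
\tilde A^{(x)}_{i_x} := \frac{1}{s_x}\left(A^{(x)}_{i_x}-(1-s_x)\frac{\Tr A^{(x)}_{i_x}}{d}\,I\right).
\]
A short computation, using $\sum_{i_x}\Tr A^{(x)}_{i_x}=\Tr I=d$, shows that $\tilde A^{(x)}=(\tilde A^{(x)}_{i_x})_{i_x\in[k_x]}$ is a bona fide POVM on $\M_d$. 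The degenerate case $s_x=0$ forces $A^{(x)}_{i_x}=\frac{\Tr A^{(x)}_{i_x}}{d}I$ for every $i_x$, i.e.\ $A^{(x)}$ is trivial, which can be dealt with separately by choosing any POVM for $\tilde A^{(x)}$ (trivial POVMs are compatible with anything), so I will treat this as a boundary remark.

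Next, I would push these denoised POVMs through the dual cloner. Define
\[
C_{i_1,\ldots,i_g} := \Psi\bigl(\tilde A^{(1)}_{i_1}\otimes \tilde A^{(2)}_{i_2}\otimes\cdots\otimes \tilde A^{(g)}_{i_g}\bigr).
\]
Positivity $C_{i_1,\ldots,i_g}\geq 0$ follows from complete positivity of $\Psi$ and positivity of each tensor factor, while unitality of $\Psi$ together with $\sum_{i_x}\tilde A^{(x)}_{i_x}=I$ for every $x$ yields $\sum_{\mathbf i}C_{\mathbf i}=\Psi(I^{\otimes g})=I$. Hence $C$ is a POVM on $\M_d$.

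The only remaining point is to verify that the $x$-th marginal of $C$ equals $A^{(x)}$. Summing over all indices except $i_x$ pulls identities into the other tensor factors, so
\[
\sum_{\mathbf j:\,j_x=i_x} C_{\mathbf j}=\Psi\bigl(I^{\otimes(x-1)}\otimes \tilde A^{(x)}_{i_x}\otimes I^{\otimes(g-x)}\bigr)=s_x\tilde A^{(x)}_{i_x}+(1-s_x)\frac{\Tr\tilde A^{(x)}_{i_x}}{d}I,
\]
by the defining property of $\Psi$. A direct check gives $\Tr\tilde A^{(x)}_{i_x}=\Tr A^{(x)}_{i_x}$, and plugging the definition of $\tilde A^{(x)}_{i_x}$ into the right-hand side recovers exactly $A^{(x)}_{i_x}$, completing the proof. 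The step I expect to be the most delicate is precisely this final identification: one must be careful that the noise parameters used to build $\tilde A^{(x)}$ match those guaranteed by $\Psi$, which is exactly why the choice $s_x=1-\min_{i}d\lambda_{\min}(A^{(x)}_i)/\Tr A^{(x)}_i$ is made, and why the trace normalization works out cleanly.
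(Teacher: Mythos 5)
Your proposal is correct and follows essentially the same route as the paper's proof: use $\Gamma^{clone}=\tilde\Gamma^{clone}$ to get the unital completely positive map $\Psi$, ``denoise'' each POVM into $\tilde A^{(x)}_{i_x}=\frac{1}{s_x}\bigl(A^{(x)}_{i_x}-(1-s_x)\frac{\Tr A^{(x)}_{i_x}}{d}I\bigr)$ (positivity from the definition of $s_x$), set $C_{\mathbf i}=\Psi(\tilde A^{(1)}_{i_1}\otimes\cdots\otimes\tilde A^{(g)}_{i_g})$, and recover $A^{(x)}_{i_x}$ as the $x$-th marginal via $\Tr\tilde A^{(x)}_{i_x}=\Tr A^{(x)}_{i_x}$. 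The only minor point is the degenerate case $s_x=0$: rather than ``any POVM'', you should substitute effects with the same traces as $A^{(x)}_{i_x}$ (e.g.\ $A^{(x)}_{i_x}$ itself) so the marginal identity still holds, a boundary detail handled at a comparable level of care in the paper itself.
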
 
\begin{proof}
    Note first that the assumptions in the statement are equivalent to the following set of inequalities: 
    \begin{equation}\label{eq:lambda-min-A}
        \forall x \in [g], \, \forall i \in [k_x], \qquad \lambda_{\min}(A^{(x)}_i) \geq \frac{1-s_x}{d} \Tr A^{(x)}_i.    
    \end{equation}
    Let $\Psi$ be the unital completely positive map appearing in the definition of $\tilde \Gamma^{clone}(g,d) \ni \mathbf s$. Let us define, for all $x \in [g]$ such that $s_x >0$,
    $$B_i^{(x)}:=\frac{1}{s_x}\left( A_i^{(x)} - (1-s_x) \frac{\Tr A^{(x)}_i}{d}I_d \right), \quad \forall i \in [k_x].$$
    If $s_x = 0$, put $B^{(x)}_i = I_d / k_x$ for all $i \in [k_x]$. We claim that $\mathbf B = (B^{(x)})_{x \in [g]}$ form a tuple of POVMs on $\M_d$. Indeed, it is easy to see that $B^{(x)}$ is normalized for all $x$, and that the positivity of $B_i^{(x)}$ follows from Eq.~\eqref{eq:lambda-min-A} for all $i$. Moreover, we have $\Tr B^{(x)}_i = \Tr A^{(x)}_i$ for all $x,i$.
    
    Define, for $\mathbf i = (i_1, \ldots, i_g) \in [k_1] \times \cdots \times [k_g]$, 
    $$C_{\mathbf i} := \Psi(B^{(1)}_{i_1} \otimes \cdots \otimes B^{(g)}_{i_g}).$$
    Since $\Psi$ is (completely) positive and unital, it follows that $C$ is a POVM on $\M_d$ with $k_1 \cdots k_g$ outcomes. From \eqref{eq:def-tilde-Gamma-clone}, it follows that the $x$-marginal of $C$ is given by
    $$\forall i_x \in [k_x], \quad \sum_{i_1, \ldots, i_{x-1}, i_{x+1}, \ldots, i_g} \!\!\!\!\!\!\!\!\!\!\!\!\!\!\!C_{\mathbf i} = \Psi\left( I_d^{\otimes (x-1)} \otimes B^{(x)}_{i_x} \otimes  I_d^{\otimes (g-x)} \right) = s_x B^{(x)}_{i_x} + (1-s_x) \frac{\Tr B^{(x)}_{i_x}}{d}I_d = A^{(x)}_{i_x},$$
    showing that the POVMs $\mathbf A$ are compatible, with joint measurement $C$.
\end{proof}
	
Note that Proposition \ref{prop:clone} follows from Theorem \ref{thm:asym-cloning-criterion} using the fact that $$\left(\frac{d+2}{2(d+1)},\frac{d+2}{2(d+1)}\right) \in \Gamma^{clone}(2, d)$$
for all $d \geq 2$.
	
\section{Compatibility dimensions --- definition and examples}\label{sec:comp-dim}
	
This section contains the definition of the main objects we study in the paper: the different notions of \emph{compatibility dimension}. 
	
We start with an example in order to provide some intuition about dimension reduction. Consider $A = \{\ketbra{i}{i}\}_{i=1}^5$ the von Neumann measurement in the computational basis of $\C^5$, and the POVM $B = (B_i)_{i=1}^5$ given by
\begin{align*}
	B_1 = \frac 1 2 \begin{bmatrix}
		1 & 1 & 0 & 0 & 0\\
		1 & 1 & 0 & 0 & 0\\
		0 & 0 & 0 & 0 & 0\\
		0 & 0 & 0 & 0 & 0\\
		0 & 0 & 0 & 0 & 0
	\end{bmatrix}&, \qquad B_2 = \frac 1 2 \begin{bmatrix}
		1 & -1 & 0 & 0 & 0\\
		-1 & 1 & 0 & 0 & 0\\
		0 & 0 & 0 & 0 & 0\\
		0 & 0 & 0 & 0 & 0\\
		0 & 0 & 0 & 0 & 0
	\end{bmatrix}, \\
	B_3 = \frac 1 2 \begin{bmatrix}
		0 & 0 & 0 & 0 & 0\\
		0 & 0 & 0 & 0 & 0\\
		0 & 0 & 1 & 1 & 0\\
		0 & 0 & 1 & 1 & 0\\
		0 & 0 & 0 & 0 & 0
	\end{bmatrix}, \qquad B_4 = \frac 1 2 &\begin{bmatrix}
		0 & 0 & 0 & 0 & 0\\
		0 & 0 & 0 & 0 & 0\\
		0 & 0 & 1 & -1 & 0\\
		0 & 0 & -1 & 1 & 0\\
		0 & 0 & 0 & 0 & 0
	\end{bmatrix}, 
	\qquad B_5 = \begin{bmatrix}
		0 & 0 & 0 & 0 & 0\\
		0 & 0 & 0 & 0 & 0\\
		0 & 0 & 0 & 0 & 0\\
		0 & 0 & 0 & 0 & 0\\
		0 & 0 & 0 & 0 & 1
	\end{bmatrix}.
\end{align*}
Note that we have $A_5=B_5 = \ketbra{5}{5}$. On the two-dimensional space spanned by $\ket 1$, $\ket 2$ (resp.~$\ket 3$, $\ket 4$), the operators $A_{1,2}$ and $B_{1,2}$ (resp.~$A_{3,4}$ and $B_{3,4}$) perform the von Neumann measurements in the two bases below (left basis for $A$ and right basis for $B$): 
\begin{center}
    \includegraphics{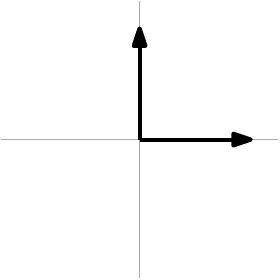} \qquad\qquad\qquad\qquad\qquad \includegraphics{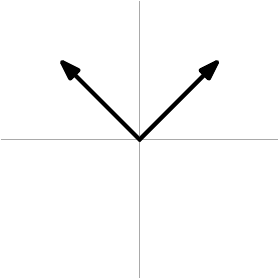}
\end{center}
Since the projective measurements $A,B$ do not correspond to the same orthonormal basis, they are not compatible. However, one can render them compatible by considering their \emph{reduction} (see Definition \ref{def:reduced-POVMs}) on a three-dimensional space. Indeed, consider the isometry $V :\C^3 \to \C^5$ given by 
\begin{equation}\label{eq:def-V}
    V = \ketbra{1}{1} + \ketbra{3}{2} + \ketbra{5}{3}.
\end{equation}
We have $$V^*AV = (\ketbra{1}{1}, 0, \ketbra{2}{2}, 0, \ketbra{3}{3})$$ while
$$V^*BV = \left(\frac{\ketbra{1}{1}}{2}, \frac{\ketbra{1}{1}}{2}, \frac{\ketbra{2}{2}}{2}, \frac{\ketbra{2}{2}}{2}, \ketbra{3}{3}\right).$$
Hence, although the original POVMs $A$, $B$ were incompatible, their reduced versions $V^*AV$ and $V^*BV$ are commuting, hence compatible. From a physical perspective, we have found a  $3$-dimensional subspace $E = \operatorname{Ran}(V)\subseteq \C^5$ such that the POVMs $A,B$ look compatible when measuring quantum states supported on $E$. This connection with quantum states shall be discussed in details in Section \ref{sec:IW}.

\medskip

We now introduce the main quantities of interest in this work, starting with the most general one. We recall that, in the theory of partial ordered sets, a \emph{down-set} is a set $X$ with the property that if $x \in X$ and $y \preceq x$, then $y \in X$ (``$\preceq$'' denotes the partial order relation).

\begin{definition}\label{def:C-A}
	Given a $g$-tuple of POVMs $\mathbf A=(A^{(1)}, \ldots, A^{(g)})$, define their \emph{compatibility down-set} as 
	\begin{equation}
	    \mathcal C(\mathbf A) := \{E \subseteq \mathbb C^d \, | \, V^* \mathbf A V \text{ are compatible for some isometry $V$ with $\operatorname{Ran}(V)=E$}\}.
	\end{equation}
	In other words, the compatibility down-set is the set of subspaces on which the POVMs $\mathbf A$ are compatible. 
\end{definition}

We gather some basic facts about the sets $\mathcal C(\mathbf A)$ in the following proposition. We denote by  $\mathcal S_r(\mathbb C^d)$ the Grassmannian of all $r$-dimensional subspaces of $\mathbb C^d$
$$\mathcal S_r(\mathbb C^d) := \{E \subseteq \mathbb C^d\, | \, \dim E = r \}$$
and we also write 
$$\mathcal S(\mathbb C^d) = \bigsqcup_{r=0}^d \mathcal S_r(\mathbb C^d)$$
for the full Grassmannian. 

\begin{proposition}
    The set $\mathcal C(\mathbf A)$ has the following properties: 
    \begin{itemize}
        \item $\mathcal C(\mathbf A)$ is a down-set in the modular lattice $\mathcal S(\mathbb C^d)$ of subspaces of $\mathbb C^d$ 
        \item $\mathcal C(\mathbf A)$ contains all the 1-dimensional subspaces
        \item the POVMs $\mathbf A$ are compatible if and only if $\mathcal C(\mathbf A) = \mathcal S(\mathbb C^d)$
        \item $\mathcal C(\mathbf A)$ is graded by $r=\dim E$: 
    $$\mathcal C(\mathbf A) = \bigsqcup_{r=0}^d \mathcal C_r(\mathbf A),$$
    where
    $$\mathcal C_r(\mathbf A) := \mathcal C(\mathbf A) \cap \mathcal S_r(\mathbb C^d).$$
    \item in Definition \ref{def:C-A}, the words ``some isometry'' can be replaced by ``all isometries''. 
    \end{itemize}
\end{proposition}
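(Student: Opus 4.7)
The plan is to verify each of the five bullets in turn; most are almost immediate once one fixes the right book-keeping for isometries, while the down-set property carries the bulk of the content.

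First I would dispose of the last bullet, because it will be used implicitly everywhere else. If $V_1, V_2:\C^r \to \C^d$ are two isometries with the same range $E$, then there exists a unitary $U \in \M_r$ with $V_2 = V_1 U$. Hence $V_2^* A^{(x)} V_2 = U^*(V_1^* A^{(x)} V_1)U$ for every $x$, so the two reduced tuples are related by a global unitary conjugation. Since a joint POVM $C$ for $V_1^* \mathbf A V_1$ yields the joint POVM $U^* C U$ for $V_2^* \mathbf A V_2$ (conjugation by $U$ preserves positivity, normalization and marginals), compatibility of one reduction forces compatibility of the other. Thus ``some isometry'' and ``all isometries'' yield the same membership in $\mathcal C(\mathbf A)$.

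Next I would check the down-set property. Suppose $E \in \mathcal C(\mathbf A)$, and let $F \subseteq E$ be a subspace with $r := \dim F$. Fix an isometry $V:\C^{\dim E} \to \C^d$ with range $E$ such that $V^* \mathbf A V$ admits a joint POVM $C$, and an isometry $W:\C^r \to \C^{\dim E}$ with range $V^{-1}(F)$; then $VW: \C^r \to \C^d$ is an isometry with range $F$. For each multi-index $\mathbf i$, set $C'_{\mathbf i} := W^* C_{\mathbf i} W$; this is a POVM on $\M_r$ because the defining identities $C_{\mathbf i} \geq 0$ and $\sum_{\mathbf i} C_{\mathbf i} = I_{\dim E}$ are preserved under the map $X \mapsto W^* X W$. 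Summing over all coordinates except the $x$-th and using that $C$ is a joint POVM for $V^* \mathbf A V$ gives
\begin{equation*}
\sum_{\substack{\mathbf j : j_x = i_x}} C'_{\mathbf j} = W^*\!\left(\sum_{\substack{\mathbf j : j_x = i_x}} C_{\mathbf j}\right)\!W = W^* V^* A^{(x)}_{i_x} V W = (VW)^* A^{(x)}_{i_x} (VW),
\end{equation*}
so $C'$ is a joint POVM for $(VW)^* \mathbf A (VW)$, and hence $F \in \mathcal C(\mathbf A)$. I expect this to be the main step, but as shown the calculation is just functoriality of the reduction map.

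The remaining bullets are short. Any POVM on $\C^1$ has effects of the form $p_i \cdot 1$ for a probability vector $p$, and such trivial POVMs are trivially compatible (take $C_{\mathbf i} = \prod_x p_{i_x}^{(x)}$); combined with the fact that every $1$-dimensional subspace is the range of some isometry $\C \to \C^d$, this gives the second bullet. For the third bullet, if $\mathbf A$ is compatible the full space $\C^d \in \mathcal C(\mathbf A)$ via $V = I_d$, and by the down-set property we obtain $\mathcal C(\mathbf A) = \mathcal S(\C^d)$; conversely, $\mathcal S(\C^d) \ni \C^d$ forces $\mathbf A$ itself to be compatible. Finally, the grading statement is a tautology: $\mathcal S(\C^d) = \bigsqcup_r \mathcal S_r(\C^d)$ by definition, and intersecting with $\mathcal C(\mathbf A)$ transports the disjoint union, giving $\mathcal C(\mathbf A) = \bigsqcup_r \mathcal C_r(\mathbf A)$.
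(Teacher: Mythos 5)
Your proof is correct and follows essentially the same route as the paper's: the down-set property via compressing a joint POVM through the induced isometry (the paper's $W = VV^*W$ trick is your $VW$ factorization), 1-dimensional subspaces via triviality/commutativity of scalar effects, and the isometry-independence via $V_2 = V_1 U$ and unitary conjugation. Your write-up is merely more explicit about the marginal computation than the paper's terse version; no substantive difference.
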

\begin{proof}
Let us prove the first claim. Consider a subspace $F \subseteq E$ of dimension $\dim F = s$ and choose an isometry $W: \mathbb C^s \to \mathbb C^d$ such that $\operatorname{Ran} W= F$. Since $F \subseteq E$, we have $W = VV^*W$. We have thus $W^* \mathbf{A} W = W^* V (V^* \mathbf A V )V^*W$. The compatibility of $W^* \mathbf{A} W$ follows then from that of $V^* \mathbf{A} V$.

The fact that $\mathcal C(\mathbf A)$ contains all vector lines follows from commutativity. Having $\mathbb C^d \in \mathcal C(\mathbf A)$ is clearly equivalent to the compatibility of the POVMs in $\mathbf A$. 

The final claim follows from the observation that any two isometries $V_{1,2}:\mathbb C^r \to \mathbb C^d$ with $\operatorname{Ran} V_{1,2}= E$ are related via a unitary $U : \mathbb C^r \to \mathbb C^r$ by $V_2 = V_1 U$, and from the fact that conjugation by a global unitary does not change compatibility. 
\end{proof}

\begin{remark}
    The map $\mathbf A \mapsto \mathcal C(\mathbf A)$ is an anti-order-morphism with respect to the pre- and post-processing order relations on the set of tuples of POVMs, see \cite[Section 5]{heinosaari2016invitation}.
\end{remark}

Since the lattice of subspaces of $\mathbb C^d$ is a cumbersome object to work with, we consider a coarse-grained version of Definition \ref{def:C-A}, where we keep track only of the dimension of the subspaces. 

\begin{definition}\label{def:R-Rbar}
Given a $g$-tuple of POVMs $\mathbf A=(A^{(1)}, \ldots, A^{(g)})$ on a $d$-dimensional quantum system, we define their \emph{compatibility dimension} as the largest dimension $r$ for which there \emph{exists} an isometry $V: \C^r \to \C^d$ reducing the POVMs to a compatible $g$-tuple: 
	\begin{align}
		R(\mathbf A) &:= \max\{ r \in [d] \, : \, \exists V : \C^r \to \C^d \text{ isom.~s.t. } V^*A^{(1)}V, \ldots, V^*A^{(g)}V \text{ are comp.}\}\\
		\nonumber &= \max\{ r \in [d] \, : \mathcal C_r(\mathbf A) \neq \emptyset\}.
	\end{align}
	
	Similarly, we define the \emph{strong compatibility dimension} of a $g$-tuple of POVMs $\mathbf A$ as the largest dimension $r$ for which \emph{all} isometries $V: \C^r \to \C^d$ reduce the POVMs to a compatible $g$-tuple:
	\begin{align}
		\bar R(\mathbf A) &:= \max\{ r \in [d] \, : \, \forall V : \C^r \to \C^d \text{ isom., } V^*A^{(1)}V, \ldots, V^*A^{(g)}V \text{ are comp.}\}\\
		\nonumber &= \max\{ r \in [d] \, : \mathcal C_r(\mathbf A)  = \mathcal S_r(\mathbb C^d)\}.
	\end{align}
\end{definition}

We have the following simple observations, which follow directly from the definition. 
\begin{remark}
	For all $g$-tuples $\mathbf A$ of POVMs on $\M_d$, we have 
	$$1 \leq \bar R(\mathbf A) \leq R(\mathbf A) \leq d.$$
	We also have $\bar R(\mathbf A) = d \iff R(A) = d \iff A^{(1)}, \ldots, A^{(g)}$  are compatible quantum measurements.
\end{remark}

For the example of the two POVMs $A,B$ introduced at the beginning of this section, using the isometry $V$ from \eqref{eq:def-V}, we have $R(A,B) \geq 3$. On the other hand, using the isometry 
$$W  = \ketbra{1}{1} + \ketbra{2}{2} + \ketbra{5}{3},$$
we have $W^*AW = (\ketbra{1}{1}, \ketbra{2}{2}, 0, 0, \ketbra{3}{3})$, while
$$W^*BW = \left( \frac 1 2 \begin{bmatrix}
		1 & 1 & 0 \\
		1 & 1 & 0 \\
		0 & 0 & 0
	\end{bmatrix}, \frac 1 2 \begin{bmatrix}
		1 & -1 & 0 \\
		-1 & 1 & 0 \\
		0 & 0 & 0 
	\end{bmatrix}, 0, 0, \begin{bmatrix}
		0 & 0 & 0\\
		0 & 0 & 0\\
        0 & 0 & 1
	\end{bmatrix}\right).$$
Note that the two POVMs $W^*AW, W^*BW$ are incompatible, proving that $\bar R(A,B) \leq 2$; we have thus provided an example where $\bar R < R$.

In this work, we shall focus mostly on the quantity $R$. Let us point out however that the measure $\bar R$ has been related in \cite{bluhm2018joint,bluhm2020compatibility} to the inclusion problem for different levels of the matrix diamond and its generalizations into a free spectrahedron defined by $\mathbf A$; we shall not pursue these aspects in this work.  

\section{Restricted incompatibility witnesses}\label{sec:IW}

We provide in this section a characterization of the incompatibility dimension with the help of incompatibility witnesses. This point of view is ``dual'' in some sense to the original definition from Section \ref{sec:comp-dim}, providing an operational interpretation of the dimensions $R(\mathbf{A})$ and $\bar R(\mathbf{A})$ as the size of the support of superensembles of quantum states allowing for an advantage in a state discrimination protocol (see Theorem \ref{thm:R-bar-R-discrimination}).

Several notions of incompatibility witnesses have been considered in the literature, by \cite{jencova2018incompatible}, \cite{carmeli2019quantum}, and \cite{bluhm2020compatibility}. We shall consider here the second listed approach, developed in \cite{carmeli2018state, carmeli2019quantum}, which has a very nice operational interpretation, in terms of state ensembles distinguishability, with prior vs.~posterior information. The same connection between incompatibility witnesses and state ensemble distinguishability was discovered independently in \cite{oszmaniec2019operational,skrzypczyk2019all,uola2019quantifying}.

Let us first describe the state discrimination protocols which provide the framework for incompatibility witnesses, following \cite{carmeli2018state}. Recall that a state ensemble $\mathcal E$ is a set of quantum states $\sigma_1, \ldots, \sigma_k \in \mathcal M_d^{1,+}(\mathbb C)$, together with a probability vector $p=(p_1, \ldots, p_k)$. We also consider \emph{superensembles} $\mathbfcal E$, which are $g$-tuples of state ensembles $(\mathcal E^{(1)}, \ldots, \mathcal E^{(g)})$, together with a probability measure $q=(q_1, \ldots, q_g)$. Note that we do not require that the number of elements in each ensemble (respectively $k_1, \ldots, k_g$) is identical. We consider now two superensemble discrimination protocols, which differ only in the timing when the state ensemble label is communicated. The main idea of the protocol is presented in Figure \ref{fig:superensemble-discrimination}, while the details of the explicit steps of the protocol are given in Table \ref{tab:superensemble-discrimination}. 

\begin{figure}[htb!]
    \centering
    \includegraphics{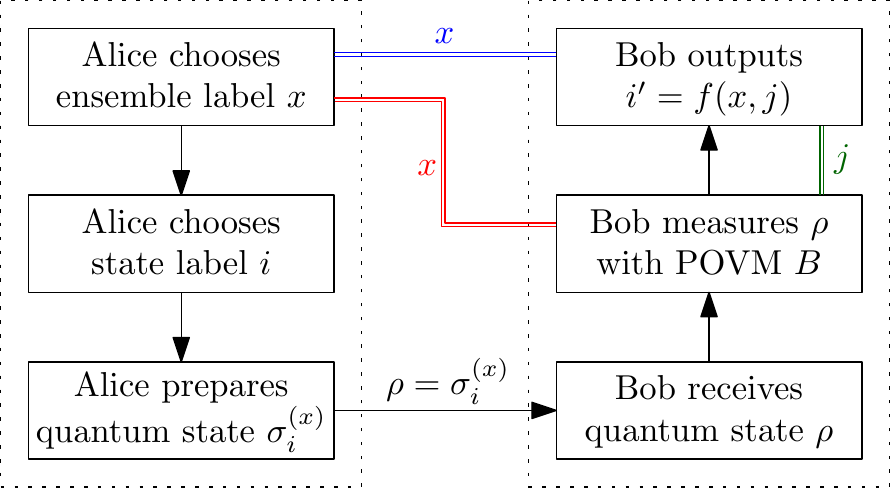}
    \caption{The superensemble discrimination protocol, with its two variants: \textcolor{red}{prior information} and \textcolor{blue}{posterior information}.}
    \label{fig:superensemble-discrimination}
\end{figure}

The input of the protocol is a superensemble $\mathbfcal{E}$, and we shall be interested in the success probability $\mathbb P_{guess}$, of Bob correctly identifying to which ensemble element Alice's state corresponds to. In other words, we are interested in Bob's best choice of a POVM $B$ such that the probability that the protocol succeeds (i.e.~$i=i'$) is maximal. Let us consider the two scenarios separately. In the scenario with prior information, Bob knows from which ensemble $\mathcal E^{(x)}$ the state $\rho$ has been sampled, so he can choose $B$ to be the POVM which discriminates best the (weighted) states from $\mathcal E^{(x)}$. We obtain
$$\mathbb P_{guess}^{\, prior}(\mathbfcal E) = \sup \left\{ \sum_{x=1}^g q_x \langle \mathcal E^{(x)}, B^{(x)} \rangle \, : \, B^{(1)}, \ldots, B^{(g)}\text{ POVMs}\right\},$$
where we use $\langle \cdot, \cdot \rangle$ to denote the state ensemble-POVM duality:
$$\langle \mathcal E^{(x)}, B^{(x)} \rangle := \sum_{i=1}^{k_x} p^{(x)}_i \Tr[\sigma^{(x)}_i B^{(x)}_i].$$

In the scenario with posterior information, Bob does not have the knowledge of $x$ at the time he performs the quantum measurement, and it has been shown in \cite[Eq.~(13)]{carmeli2018state} that 
$$\mathbb P_{guess}^{\, post}(\mathbfcal E) = \sup \left\{ \sum_{x=1}^g q_x \langle \mathcal E^{(x)}, C^{(x)} \rangle \, : \, C^{(1)}, \ldots, C^{(g)}\text{ compatible POVMs}\right\},$$
The formula above can be understood as follows: since at the time he performs the measurement, Bob does not know from which ensemble $\mathcal E^{(x)}$ the state $\rho$ is sampled from, his best bet is to perform a measurement with a large outcome set and then, once he learns the ensemble label $x$, to perform a classical post-processing of his measurement outcome $j$ and the ensemble label $x$. This classical post-processing is equivalent to Bob measuring a joint POVM $C$ of \emph{compatible} POVMs $C^{(1)}, \ldots, C^{(g)}$, having respectively $k_1, \ldots, k_g$ outcomes, see \cite[Proposition 1]{carmeli2018state}. Since the set over which the supremum is considered is smaller in this scenario, we have $\mathbb P_{guess}^{\, prior}(\mathbfcal E) \geq \mathbb P_{guess}^{\, post}(\mathbfcal E)$.

\renewcommand{\arraystretch}{2}	
\begin{table}[ht]
\begin{tabular}{|r|c|l|}
\hline
\cellcolor[HTML]{C0C0C0} \textbf{Step} & \cellcolor[HTML]{C0C0C0}\textbf{\textcolor{red}{Prior information}} & \multicolumn{1}{c|}{\cellcolor[HTML]{C0C0C0}\textbf{\textcolor{blue}{Posterior information}}} \\ \hline
1    & \multicolumn{2}{c|}{Alice chooses randomly an ensemble label $x \in [g]$, using probabilities $q$}                               \\ \hline
2    & \multicolumn{2}{c|}{Alice chooses randomly a state label $i \in [k_x]$, using probabilities $p^{(x)}$}                           \\ \hline
3    & \multicolumn{2}{c|}{Alice sends the quantum state $\rho = \sigma^{(x)}_i$ to Bob}                                                \\ \hline
4    & \textcolor{red}{Alice sends the ensemble label $x$ to Bob}          &                                                                             \\ \hline
5    & \multicolumn{2}{c|}{Bob receives the (unknown) quantum state $\rho$}                                                              \\ \hline
6    & \multicolumn{2}{c|}{Bob chooses a POVM $B$ and measures $\rho$, obtaining an output $j$}                                         \\ \hline
7    &                                                    & \textcolor{blue}{Alice sends the ensemble label $x$ to Bob}                                   \\ \hline
8    & \multicolumn{2}{c|}{Bob outputs $i' = f(x,j)$}                                                                                   \\ \hline
9    & \multicolumn{2}{c|}{The protocol succeeds if $i'=i$}                                                                             \\ \hline
\end{tabular}
\caption{Superensemble discrimination protocols, with prior and posterior information. In the \textcolor{red}{prior information} scenario, Alice sends Bob the ensemble label $x$ \textcolor{red}{before} Bob makes his measurement, allowing him to choose a POVM depending on the value $x$. In the \textcolor{blue}{posterior information} scenario, Bob only learns $x$ \textcolor{blue}{after} performing his measurement, which cannot depend on $x$.}
\label{tab:superensemble-discrimination}
\end{table}
\renewcommand{\arraystretch}{1}

Next, Carmeli, Heinosaari and Toigo define incompatibility witnesses as follows. 
\begin{definition}[\cite{carmeli2018state,carmeli2019quantum}]
    An \emph{incompatibility witness} is a superensemble $\mathbfcal E$ such that 
    $\mathbb P_{guess}^{\, prior}(\mathbfcal E) > \mathbb P_{guess}^{\, post}(\mathbfcal E)$.
\end{definition}

Incompatibility witnesses are used to detect incompatibility of $g$-tuples of POVMs in an obvious manner: given $\mathbf A = (A^{(1)}, \ldots A^{(g)})$, we have
\begin{equation}\label{eq:IW-implies-incomp}
\sum_{x=1}^g q_x \langle \mathcal E^{(x)}, A^{(x)} \rangle =:\langle \mathbfcal E, \mathbf A \rangle > \mathbb P_{guess}^{\, post}(\mathbfcal E) \implies \mathbf A \text{ are incompatible}.
\end{equation}

Obviously, for any $g$-tuple of POVMs $\mathbf A$, we have $\langle \mathbfcal E, \mathbf A \rangle \leq \mathbb P_{guess}^{\, prior}(\mathbfcal E)$; the incompatibility witness $\mathbfcal E$ detect the incompatibility of $\mathbf A$ only when 
$$\langle \mathbfcal E, \mathbf A \rangle \in ( \mathbb P_{guess}^{\, post}(\mathbfcal E),  \mathbb P_{guess}^{\, prior}(\mathbfcal E)].$$

Importantly, Carmeli, Heinosaari and Toigo establish the following converse to \eqref{eq:IW-implies-incomp}. 

\begin{theorem}{\cite[Theorem 2]{carmeli2019quantum}}
A $g$-tuple $\mathbf A$ of POVMs on $\M_d$ are compatible if and only if, for all incompatibility witnesses $\mathbfcal E$ on $\C^d$, we have 
$$\langle \mathbfcal E, \mathbf A \rangle \leq \mathbb P_{guess}^{\, post}(\mathbfcal E).$$
\end{theorem}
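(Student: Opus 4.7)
The forward direction is immediate: if $\mathbf A$ is compatible, then $\mathbf A$ itself is admissible in the supremum defining $\mathbb P_{guess}^{\,post}(\mathbfcal E)$, so $\langle \mathbfcal E, \mathbf A\rangle \leq \mathbb P_{guess}^{\,post}(\mathbfcal E)$ for \emph{every} superensemble $\mathbfcal E$, in particular for every incompatibility witness. The content of the theorem is the converse, which I would prove by contrapositive using convex separation.

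Assume $\mathbf A$ is incompatible, with $A^{(x)}$ having $k_x$ outcomes. The set $\mathcal C \subset (\M_d^{sa})^{k_1+\cdots+k_g}$ of compatible $g$-tuples of POVMs with those outcome counts is convex and closed (being the image, under the continuous marginal-taking linear map, of the compact convex set of POVMs on $[k_1]\times\cdots\times[k_g]$). Since $\mathbf A \notin \mathcal C$, the Hahn--Banach separation theorem supplies Hermitian matrices $H^{(x)}_i \in \M_d^{sa}$ and a scalar $\alpha$ with
\begin{equation*}
\sum_{x=1}^g \sum_{i=1}^{k_x} \Tr\bigl[H^{(x)}_i A^{(x)}_i\bigr] \;>\; \alpha \;\geq\; \sum_{x=1}^g \sum_{i=1}^{k_x} \Tr\bigl[H^{(x)}_i C^{(x)}_i\bigr] \qquad \text{for all } \mathbf C \in \mathcal C.
\end{equation*}

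The remaining task is to reinterpret this separating functional as evaluation against a superensemble. First observe that replacing each $H^{(x)}_i$ by $H^{(x)}_i + \lambda I_d$ shifts both sides of the inequality by the \emph{same} constant $g\lambda d$, because every POVM satisfies $\sum_i B^{(x)}_i = I_d$; choosing $\lambda$ large enough we may therefore assume all $H^{(x)}_i \geq 0$, without destroying the strict separation. Now set $Z := \sum_{x,i}\Tr H^{(x)}_i$ (strictly positive, since otherwise the functional would be identically zero and could not separate), $q_x := Z^{-1}\sum_i \Tr H^{(x)}_i$, $p^{(x)}_i := (\sum_j \Tr H^{(x)}_j)^{-1}\Tr H^{(x)}_i$, and $\sigma^{(x)}_i := H^{(x)}_i / \Tr H^{(x)}_i$ (with the convention that $\sigma^{(x)}_i$, resp.\ $p^{(x)}_i$, is taken to be $I_d/d$, resp.\ $1/k_x$, when the denominator vanishes; this has no effect since the corresponding weight is zero). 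By construction $(q_x)_x$ and each $(p^{(x)}_i)_i$ are probability distributions, each $\sigma^{(x)}_i$ is a density matrix, and $H^{(x)}_i = Z\,q_x p^{(x)}_i \sigma^{(x)}_i$. Dividing the separation inequality through by $Z$ yields
\begin{equation*}
\langle \mathbfcal E, \mathbf A\rangle \;>\; \alpha/Z \;\geq\; \sup_{\mathbf C \in \mathcal C}\langle \mathbfcal E, \mathbf C\rangle \;=\; \mathbb P_{guess}^{\,post}(\mathbfcal E).
\end{equation*}
Finally, $\mathbf A$ is admissible in the definition of $\mathbb P_{guess}^{\,prior}(\mathbfcal E)$, so $\mathbb P_{guess}^{\,prior}(\mathbfcal E) \geq \langle \mathbfcal E, \mathbf A\rangle > \mathbb P_{guess}^{\,post}(\mathbfcal E)$, showing that $\mathbfcal E$ is indeed an incompatibility witness and contradicting the hypothesis.

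The main obstacle is the conversion of an abstract Hahn--Banach functional into the positive, probabilistically normalized data of a superensemble. The clever step that unlocks it is the translation trick $H^{(x)}_i \mapsto H^{(x)}_i + \lambda I_d$: it is harmless precisely because the POVM normalization $\sum_i B^{(x)}_i = I_d$ is shared by $\mathbf A$ and every candidate $\mathbf C$, so positivity of the $H^{(x)}_i$ can be bought without breaking the strict separation, after which a purely combinatorial rescaling assembles the probabilities $q_x, p^{(x)}_i$ and states $\sigma^{(x)}_i$.
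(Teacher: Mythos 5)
Your argument is correct: the forward direction is immediate from the definition of $\mathbb P_{guess}^{\,post}$, and your converse via Hahn--Banach separation of $\mathbf A$ from the compact convex set of compatible tuples, followed by the shift $H^{(x)}_i \mapsto H^{(x)}_i + \lambda I_d$ (harmless because of the common normalization $\sum_i B^{(x)}_i = I_d$) and the rescaling into probabilities and density matrices, is sound, including the observation that $Z>0$ is forced by strict separation. Note that the paper itself gives no proof of this statement --- it is quoted from \cite{carmeli2019quantum} --- and your route is essentially the standard convex-duality argument used there, so there is nothing to flag beyond the remark that your witness $\mathbfcal E$ has ensemble sizes matching the outcome numbers $k_1,\ldots,k_g$ of $\mathbf A$, which is exactly what the definitions of $\mathbb P_{guess}^{\,prior}$ and $\mathbb P_{guess}^{\,post}$ require.
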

	
\medskip

We discuss now the relation between a restricted notion of incompatibility witnesses and the compatibility dimension we introduced in Section \ref{sec:comp-dim}. We start with the following important definition. 

\begin{definition}
    Given a subspace $H \subseteq \C^d$, we say that a quantum state $\sigma$ is \emph{supported} on $H$ if 
    $\operatorname{Ran}(\sigma) \subseteq H$. Equivalently, $\sigma$ is supported on $H$ if  $P_H \sigma P_H = \sigma$, where $P_H$ is the orthogonal projection on $H$. We say that an ensemble of quantum states $\mathcal E$ (resp.~a superensemble $\mathbfcal E$) is supported on $H$ if all the states $\sigma_i \in \mathcal E$ with $p_i > 0$ are supported on $H$. We define the corresponding notion for superensembles in a similar manner. 
\end{definition}

Our starting point is the following observation. Given an ensemble of quantum states supported on a subspace $H$ and a POVM $A$, we have, for an isometry $V : \C^{\dim H} \to \C^d$ with $\operatorname{Ran} V = H$:
\begin{align*}
    \langle \mathcal E, A \rangle &= \sum_{i=1}^k p_i \Tr[\sigma_i A_i] = \sum_{i=1}^k p_i \Tr[P_H\sigma_i P_H A_i] \\
    &= \sum_{i=1}^k p_i \Tr[VV^*\sigma_i VV^* A_i] = \sum_{i=1}^k p_i \Tr[V^*\sigma_iVV^* A_iV] = \langle V^*\mathcal EV, V^*AV \rangle,
\end{align*}
\\
On the other hand, any (compatible) g-tuple of POVMs $\mathbf B=(B^{(1)},\ldots,B^{(g)})$ on $\M_{dim H}$ can be written as $\mathbf B=V^*\mathbf A V$ where $\mathbf A=(A^{(1)},\ldots,A^{(g)})$ is a (compatible) g-tuple of POVMs on $\M_d$. Indeed it is enough to define $A_{i}^{(x)}=VB_{i}^{(x)}V^*+\frac{I_d-VV^*}{k_k}$ for all $i\in [k_x]$, where $k_x$ is the number of outcomes of $B^{(x)}$. This fact, together with the previous equation, immediately yields $\mathbb P_{guess}^{prior}(\mathbfcal E)=\mathbb P_{guess}^{prior}(V^*\mathbfcal E V)$ and $\mathbb P_{guess}^{post}(\mathbfcal E)=\mathbb P_{guess}^{post}(V^*\mathbfcal E V)$ for all superesnsembles $\mathbfcal E$ supported on $H$.
\\

We have the following result, relating (super)ensembles supported on subspaces to the (strong) compatibility dimension of POVMs. 

\begin{theorem}\label{thm:R-bar-R-discrimination}
    Given a $g$-tuple $\mathbf A$ of POVMs on $\M_d$ and an integer $r \in [d]$, we have $R(\mathbf A) \geq r$ if and only if there exists a subspace $H \in \mathcal S_r(\C^d)$ (i.e.~$H \subseteq \C^d$ with $\dim H = r$) such that for all superensembles $\mathbfcal E$ supported on $H$ we have 
    $$\langle \mathbfcal E, \mathbf A \rangle \leq \mathbb P_{guess}^{\, post}(\mathbfcal E).$$
    Similarly, $\bar R(\mathbf A) \geq r$ if and only if for all superensembles $\mathbfcal E$ supported on subspaces of dimension $r$, the relation above holds. 
\end{theorem}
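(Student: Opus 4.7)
The plan is to reduce both equivalences to the Carmeli-Heinosaari-Toigo characterization of compatibility via superensemble discrimination, transported between $\C^d$ and $\C^r$ by an isometry. The key technical bridge is already established in the paragraph preceding the statement: for any isometry $V:\C^r\to\C^d$ with range $H$ and any superensemble $\mathbfcal E$ supported on $H$, we have $\langle \mathbfcal E,\mathbf A\rangle = \langle V^*\mathbfcal E V,\,V^*\mathbf A V\rangle$ as well as $\mathbb P_{guess}^{\,post}(\mathbfcal E)=\mathbb P_{guess}^{\,post}(V^*\mathbfcal E V)$, the latter because the POVM extension $B^{(x)}_i \mapsto VB^{(x)}_iV^* + (I_d-VV^*)/k_x$ is a bijection between compatible tuples on $\M_r$ and compatible tuples on $\M_d$ whose marginals pair identically with states supported on $H$.

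For the forward direction of the first equivalence, assume $R(\mathbf A)\geq r$ and pick an isometry $V:\C^r\to\C^d$ such that $V^*\mathbf A V$ is a compatible tuple on $\M_r$. Set $H:=\operatorname{Ran}(V)$. For any superensemble $\mathbfcal E$ supported on $H$, the Carmeli-Heinosaari-Toigo theorem applied on $\C^r$ to the compatible tuple $V^*\mathbf A V$ gives $\langle V^*\mathbfcal E V,\,V^*\mathbf A V\rangle \leq \mathbb P_{guess}^{\,post}(V^*\mathbfcal E V)$, and translating both sides back via the two identities above yields $\langle \mathbfcal E,\mathbf A\rangle \leq \mathbb P_{guess}^{\,post}(\mathbfcal E)$, as required.

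For the reverse direction, suppose the inequality holds on some $H\in\mathcal S_r(\C^d)$ for every superensemble supported on $H$, and fix any isometry $V:\C^r\to\C^d$ with $\operatorname{Ran}(V)=H$. I want to conclude that $V^*\mathbf A V$ is compatible on $\M_r$. If it were not, the Carmeli-Heinosaari-Toigo theorem (in its contrapositive form on $\C^r$) would produce a superensemble $\mathbfcal F$ on $\C^r$ with $\langle \mathbfcal F,\,V^*\mathbf A V\rangle > \mathbb P_{guess}^{\,post}(\mathbfcal F)$. Push $\mathbfcal F$ forward by $\sigma\mapsto V\sigma V^*$ to obtain $\mathbfcal E:=V\mathbfcal F V^*$, which is automatically supported on $H$ since $\operatorname{Ran}(V\sigma V^*)\subseteq\operatorname{Ran}(V)=H$. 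The two isometry identities then give $\langle \mathbfcal E,\mathbf A\rangle > \mathbb P_{guess}^{\,post}(\mathbfcal E)$, contradicting the hypothesis on $H$. Hence $V^*\mathbf A V$ is compatible, proving $R(\mathbf A)\geq r$.

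The characterization of $\bar R(\mathbf A)$ follows by the same argument after swapping the ``there exists $H$'' quantifier for ``for all $H$'' on both sides of the equivalence: the forward direction applies the first paragraph's argument to every isometry of rank $r$, and the reverse direction invokes the second paragraph's contradiction separately on each $H\in\mathcal S_r(\C^d)$. No genuine obstacle is expected, since the nontrivial content—the state-discrimination reformulation of compatibility—is imported from \cite{carmeli2019quantum}; the new ingredient is just the elementary bookkeeping that compatibility is preserved under the isometry extension/restriction, which is exactly what the paragraph preceding the statement provides.
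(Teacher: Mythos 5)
Your proposal is correct and follows essentially the same route as the paper: the same two isometry identities $\langle \mathbfcal E,\mathbf A\rangle=\langle V^*\mathbfcal E V,V^*\mathbf A V\rangle$ and $\mathbb P_{guess}^{\,post}(\mathbfcal E)=\mathbb P_{guess}^{\,post}(V^*\mathbfcal E V)$, combined with the Carmeli--Heinosaari--Toigo characterization transported between $\C^d$ and $\C^r$. The only cosmetic difference is that you run the reverse implication by contradiction, whereas the paper argues directly via the surjectivity of $\mathbfcal E\mapsto V^*\mathbfcal E V$ (with $\mathbfcal E=V\mathbfcal E' V^*$), which is the same observation.
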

\begin{proof}
We shall only prove the first claim, leaving the proof of the second claim to the reader. The condition $R(\mathbf A) \geq r$ is equivalent to the existence of an isometry $V: \C^r \to \C^d$ such that the POVMs $V^* \mathbf A V$ are compatible. Let us fix such an isometry $V:  \C^r \to \C^d$ with $\operatorname{Ran} V = H$ and start with the proof of the $\implies$ implication. For a superensemble $\mathbfcal E$ supported on $H$, we have 
$$\langle \mathbfcal E, \mathbf A \rangle =  \langle V^*\mathbfcal E V, V^*\mathbf A V \rangle  \leq \mathbb P_{guess}^{\, post}(V^*\mathbfcal E V) = \mathbb P_{guess}^{\, post}(\mathbfcal E),$$
proving the claim. The reverse implication follows the same reasoning: the equation above is still true, and all superensembles $\mathbfcal E'$ on $\C^r$ can be written as $V^* \mathbfcal E V$ for some $\mathbfcal E$ supported on $H$, namely $\mathbfcal E=V \mathbfcal E' V^*$ .
\end{proof}

To summarize, we have shown in this section that the compatibility dimensions of a $g$-tuple of POVMs can be understood in terms of a  superensemble distinguishability protocol, with states having restricted support in $\mathbb C^d$. 

\section{Two orthonormal bases}\label{sec:2-bases}

We consider in this section the case of two von Neumann measurements $A$ and $B$ corresponding to orthonormal bases in $\C^d$, say $\{\ket{a_i}\}_{i=1}^d$ and $\{\ket{b_i}\}_{i=1}^d$. The first observation that we can make is that we can assume, by a global unitary rotation, that one of the bases, say the first one, is the computational (canonical) basis in $\C^d$: $\ket{a_i} = \ket i$ for all $1 \leq i \leq d$. Let $U$ be the unitary operator implementing the change of basis, such that the second basis is given by the columns of $U$, $\{\ket{u_i}\}_{i=1}^d$. With this notation, our task is now to compute, for some given unitary matrix $U \in \mathcal U_d$,
$$\mathcal Z(U) := R\left( \{\ketbra{i}{i}\}_{i=1}^d,\{\ketbra{u_i}{u_i}\}_{i=1}^d \right).$$

Consider now an isometry $V:\C^r \to \C^d$ and note that the operators $\tilde A_i = V^*\ketbra{i}{i}V$ and $\tilde B_i = V^*\ketbra{u_i}{u_i}V$ have rank at most one. Compatibility of unit rank POVMs is essentially the same as equality, up to permutation of effect operators and summing together collinear effects \cite{kuramochi2015minimal,heinosaari2019post}. We have thus the following lower bound; we conjecture that the bound is tight for generic, non-degenerate unitary matrices. 

\begin{proposition}
For any unitary operator $U \in \mathcal U_d$, we have 
\begin{equation}\label{eq:bound-R-U}\mathcal Z(U) \geq \max_{\substack{z \in \C^d \\ \sigma \in \mathfrak S_d}} \dim \ker(P_{z,\sigma} -U),
\end{equation}
where $\mathfrak S_d$ is the symmetric group on $d$ elements, and $P_{z, \sigma}$ is the generalized permutation matrix given by
$$P_{z,\sigma}(i,j) = z_j \delta_{i,\sigma(j)}, \qquad \forall i,j\in[d].$$
\end{proposition}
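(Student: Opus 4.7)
The plan is to produce, for a given pair $(z,\sigma)$ achieving the maximum $r := \dim\ker(P_{z,\sigma}-U)$, an explicit isometry $V:\C^r\to\C^d$ making the two reduced POVMs compatible; this immediately yields $\mathcal Z(U)\geq r$. The natural candidate is an isometry $V$ with $\operatorname{Ran}(V) = E := \ker(P_{z,\sigma}-U)$, so that $UV = P_{z,\sigma}V$ holds by construction. Both reduced POVMs $\tilde A_i = V^*|i\rangle\langle i|V = \alpha_i\alpha_i^*$ and $\tilde B_i = V^*|u_i\rangle\langle u_i|V = \beta_i\beta_i^*$, with $\alpha_i := V^*|i\rangle$ and $\beta_i := V^*U|i\rangle$, then have rank at most one.

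Taking adjoints in $UV = P_{z,\sigma}V$ gives $V^*U^* = V^*P_{z,\sigma}^*$, which evaluated on a computational basis vector reads
\[
V^*U^*|i\rangle \;=\; \bar z_{\sigma^{-1}(i)}\,\alpha_{\sigma^{-1}(i)}.
\]
Because compatibility is invariant under simultaneous global unitary conjugation of both POVMs, one has $\mathcal Z(U)=\mathcal Z(U^*)$, so I would replace the POVM $B$ with its ``dual'' $B' = \{|b_i\rangle\langle b_i|\}_i$, where $|b_i\rangle := U^*|i\rangle$; the clean formula above then applies directly, and the reduced effects $\tilde B'_i := V^*|b_i\rangle\langle b_i|V$ satisfy $\tilde B'_{\sigma(k)} = |z_k|^2\,\alpha_k\alpha_k^* = |z_k|^2\,\tilde A_k$. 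Thus $\tilde A$ and $\tilde B'$ are two families of rank-one effects that coincide up to the permutation $\sigma$ and a diagonal rescaling by $|z_k|^2$.

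I would then invoke the criterion recalled just before the proposition: two rank-one POVMs are compatible iff they agree after a suitable permutation of effects and a summation of collinear ones. The unitarity constraint $\sum_i\tilde A_i = I_r = \sum_i\tilde B'_i$ forces the operator identity
\[
\sum_k\bigl(|z_k|^2-1\bigr)\,\alpha_k\alpha_k^* \;=\; 0,
\]
and grouping the $\alpha_k$'s by the direction they span in $\C^r$ turns this into the per-direction weight balance needed to match the merged effects. An explicit joint POVM is then built as a transportation $C_{ij} = c_{ij}\,\alpha_i\alpha_i^*$ supported on the pairs $(i,j)$ for which $\alpha_i$ and $\beta'_j$ are parallel, closing the argument via Theorem~\ref{thm:asym-cloning-criterion}-style bookkeeping.

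The most delicate step is extracting the per-direction weight balances from the single operator identity above, since the distinct rank-one projectors among the $\alpha_k$'s need not be linearly independent in the space of Hermitian $r\times r$ matrices. I expect this to be where the specific structure $E=\ker(P_{z,\sigma}-U)$ is used beyond merely asking that $U$ and $P_{z,\sigma}$ agree on $E$, together with the elementary observation that $\|\alpha_k\|^2 = 0$ whenever $|k\rangle\perp E$, which eliminates the irrelevant $z_k$'s and lets one read off the balance within each collinearity class.
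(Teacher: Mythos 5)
Your route is essentially the paper's: pick an isometry onto the relevant kernel, deduce that the reduced rank-one effects are collinear, $\tilde B'_{\sigma(k)}=|z_k|^2\,\tilde A_k$, and conclude via the rank-one compatibility criterion. (The detour through $U^*$ and $\mathcal Z(U)=\mathcal Z(U^*)$ is unnecessary: the paper works with $E=\ker\bigl[(P_{z,\sigma}-U)^*\bigr]$, which has the same dimension and yields $V^*(P_{z,\sigma}-U)=0$, hence $V^*\ket{u_j}=z_jV^*\ket{\sigma(j)}$ directly; this is cosmetic.) The genuine gap is exactly the step you flag at the end, and it cannot be closed the way you hope. At that point the only facts available are the two operator identities $\sum_k\alpha_k\alpha_k^*=I_r$ and $\sum_k|z_k|^2\alpha_k\alpha_k^*=I_r$, and they do not imply the per-direction weight balance; moreover, when the balance fails the reduced POVMs are genuinely incompatible, so no transportation bookkeeping can repair the construction on that subspace. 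Concretely, take $r=2$, $d=4$, $\sigma=\id$, $\alpha_1=e_1/\sqrt2$, $\alpha_2=e_2/\sqrt2$, $\alpha_3=(e_1+e_2)/2$, $\alpha_4=(e_1-e_2)/2$, and $|z_1|^2=|z_2|^2=1+2\epsilon$, $|z_3|^2=|z_4|^2=1-2\epsilon$ for small $\epsilon>0$. Both identities hold, so the $4\times 2$ matrix $V$ with rows $\alpha_i^*$ is an isometry, $P_{z,\id}$ acts isometrically on $E=\operatorname{Ran}V$, and extending $P_{z,\id}|_E$ to a unitary $U\in\mathcal U_4$ realizes all of your hypotheses ($UV=P_{z,\id}V$, collinearity with exactly these moduli). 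Yet $(\alpha_i\alpha_i^*)_i$ and $(|z_i|^2\alpha_i\alpha_i^*)_i$ are incompatible: the four directions are pairwise distinct, so in any joint POVM the only possibly nonzero entry of the first column is $C_{11}$, which would have to equal $\frac{1+2\epsilon}{2}\,e_1e_1^*\not\le\tilde A_1=\frac12\,e_1e_1^*$. Since your proposed derivation of the balance uses only data that holds in this example, it cannot go through.

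So the weight balance is an additional hypothesis, not a consequence; the construction is complete only when it holds, most simply when $|z_j|=1$ for every $j$ with $V^*\ket{\sigma(j)}\neq 0$, in which case the reduced POVMs coincide up to the relabelling $\sigma$ and compatibility is immediate. For comparison, the paper's own proof is no more detailed at this juncture: it passes from the collinearity relation to ``hence compatible'' in one sentence, implicitly in this balanced regime (all later uses in that section, e.g.\ the Fourier example, take $z$ with unimodular entries). Finally, the closing appeal to Theorem \ref{thm:asym-cloning-criterion} is out of place: the cloning criterion plays no role here; what is needed is precisely the post-processing/weight-matching characterization of compatibility for rank-one POVMs that you quoted, and that is exactly where your argument (and the sketch in the paper) stops.
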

\begin{proof}
First, note that in \eqref{eq:bound-R-U} one can consider the adjoint of the operator $P_{z,\sigma} -U$, since for any matrix $X \in \mathcal M_d$, we have $\dim \ker X = \dim \ker(X^*)$.  Consider a vector of scalars $z \in \C^d$ and a permutation $\sigma \in \mathfrak S_d$, and let $E = \ker\left[(P_{z,\sigma} -U)^*\right]$ having dimension $r:=\dim E$. We have then, for some isometry $V : \C^{r} \to \C^d$ with range $E$, 
$$(P_{z,\sigma} -U)^*V = 0_{d \times r} \implies V^*(P_{z,\sigma} -U) = 0_{r \times d}.$$
Hence, for any $j \in [d]$, we have 
$$V^*\ket{u_j} = z_jV^*\ket{\sigma(j)} \implies V^*\ketbra{u_j}{u_j}V = |z_j|^2 V^*\ketbra{\sigma(j)}{\sigma(j)}V.$$
Hence, $V^*AV$ and $V^*BV$ are compatible POVMs, having collinear effect operators. 
\end{proof}

We leave the question of computing $\mathcal Z(U)$ open in the general case. Even the bound from Eq.~\eqref{eq:bound-R-U} seems to be hard to compute in general. A trivial lower bound is given by the largest multiplicity of the eigenvalues of $U$, corresponding to taking a constant vector $z$ and fixing $\sigma = \mathrm{id}$. A natural candidate for the vector $z$ is the diagonal of $U$, i.e.~$z_i = u_{ii}$, a choice which has the merit that the matrices $P_{z,\mathrm{id}}$ and $U$ have identical diagonals. Imposing the additional constraint $|z_i| = 1$ (i.e.~$P_{z,\mathrm{id}}$ is unitary) amounts to choosing $z_i = \operatorname{phase}(u_{ii}) = u_{ii}/|u_{ii}|$, in the case of non-zero $u_{ii}$. These values are the solution of the following optimization problem:
$$\operatorname{argmin}_{z \in \C^d} \|P_{z,\mathrm{id}}-U\|_2^2,$$
with or without the additional constraint that $P_{z,\mathrm{id}}$ is unitary. The problem above is similar in nature to the bound from \eqref{eq:bound-R-U}: the objective functions correspond to the matrices $P_{z,\mathrm{id}}$ and $U$ being close to each other. 

\begin{example}\label{ex:Fourier}
    In the case of the Fourier operator $U = F_d$ given by $F_d(\alpha,\beta) = \omega^{\alpha\beta}$ with $\omega = \exp(2\pi\mathrm{i}/d)$, we have, with the choice $z_i = 1$ and $\sigma = \mathrm{id}$, 
    $$\mathcal Z(F_d) \geq 1+\lfloor d/4 \rfloor,$$
    using the eigenvalue $\lambda=1$ of $F_d$ \cite{mcclellan1972eigenvalue}. For example, in the case $d=4$, a basis of the 2-dimensional eigenspace associated to the eigenvalue $\lambda = 1$ is given by the following two vectors: 
    $$(1,0,1,0) \qquad \text{ and } \qquad (2,1,0,1).$$
    For the general case, the problem of constructing a ``simple'' eigenbasis of $F_d$ has received a lot of attention in the literature, see \cite{grunbaum1982eigenvectors,fendler2013discrete}.
\end{example}

\section{Complementary bases}
\label{sec:MUB}
We shall consider in this section the problem of dimension reduction for the special case of two (noisy) mutually unbiased bases. Recall that a set of $g$ orthonormal bases $\left\{\{\ket{b^{(x)}_i}\}_{i \in [d]} \right\}_{x \in [g]}$ are called \emph{mutually unbiased} (MUB) \cite{ivanovic1981geometrical,durt2010mutually} if $$\forall x \neq y \in [g], \, \forall i,j \in [d], \qquad |\braket{b^{(x)}_i | b^{(y)}_j}|^2=\frac{1}{d}.$$

Such kind of bases are very important in quantum information theory. For example, it was shown in \cite{wooters} that density matrices can be completely determined by making measurement in MUBs, and that this protocol is optimal, in the sense that the statistical error is minimized. The construction of such bases is deeply related to number theory and prime numbers which are very important for pure  mathematical investigation while they have several applications in quantum information theory, quantum cryptography and entanglement, tomography, etc.; see \cite{durt2010mutually}. 

Consider two mutually unbiased bases $\{\ket{a_1}, \ldots, \ket{a_d}\}$ and $\{\ket{b_1}, \ldots, \ket{b_d}\}$ in $\C^d$, for example the computational and the Fourier bases from Example \ref{ex:Fourier}. Let us introduce the noisy versions of the POVMs
\begin{align*}
	\mathcal N_\lambda[A] = \left( \lambda \ketbra{a_1}{a_1} + (1-\lambda)\frac{I_d}{d}, \ldots, \lambda \ketbra{a_d}{a_d} + (1-\lambda)\frac{I_d}{d}\right) \\
	\mathcal N_\mu[B] = \left( \mu \ketbra{b_1}{b_1} + (1-\mu)\frac{I_d}{d}, \ldots, \mu \ketbra{b_d}{b_d} + (1-\mu)\frac{I_d}{d}\right).
\end{align*}

The values $(\lambda, \mu)$ for which the POVMs above are compatible have been computed in \cite{carmeli2012informationally,carmeli2019quantum}: for $(\lambda, \mu) \in [0,1]^2$, $\mathcal N_\lambda[A]$ and $\mathcal N_\mu[B]$ are compatible iff
$$\lambda+\mu \leq 1 \text{ or } \lambda^2+\mu^2+\frac{2(d-2)}{d}(1-\lambda)(1-\mu) \leq 1.$$

We consider first the symmetric case $\lambda=\mu$. In this situation, the POVMs $\mathcal N_\lambda[A]$ and $\mathcal N_\lambda[B]$ are compatible if and only if 
\begin{equation}\label{eq:noisy-MUB-incompatible}
\lambda\leq\frac{1}{2}\left(1+\frac{1}{1+\sqrt{d}}\right).
\end{equation}
We shall show that for the same symmetric amount of noise and with a particular choice of an isometry $V : \C^r \to \C^d$, reducing the dimension of two incompatible noisy MUB measurements renders them compatible. 

\begin{theorem}\label{thm:reduce-MUBs}
    Consider two POVMs $A,B$ corresponding to a pair of mutually unbiased bases which can be extended to a triple of MUBs. For any  $2\leq r < \sqrt{d}$, there exists a non-empty interval $\Lambda_{r,d} \subset [0,1]$ (see Eq.~\eqref{eq:interval}) such that, for all $\lambda \in \Lambda_{r,d}$,
	\begin{itemize}
    \item the noisy MUB measurements $\mathcal N_{\lambda}[A]$, $\mathcal N_{\lambda}[B]$ are incompatible
    \item their reduced versions $V^* \mathcal N_{\lambda}[A] V$, $V^* \mathcal N_{\lambda}[B] V$ are compatible, 
\end{itemize}
    where $V: \mathbb C^r \to \mathbb C^d$ is an isometry obtained by truncating a third MUB.
\end{theorem}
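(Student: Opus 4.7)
The plan is to exploit the existence of a third MUB: I would define the isometry $V:\C^r \to \C^d$ by truncating the third basis $\{\ket{c_j}\}_{j=1}^d$, namely $V = \sum_{j=1}^r \ketbra{c_j}{j}$, so that $\operatorname{Ran}(V) = \operatorname{span}\{\ket{c_1},\ldots,\ket{c_r}\}$. The left endpoint of $\Lambda_{r,d}$ is immediate from the criterion quoted in Eq.~\eqref{eq:noisy-MUB-incompatible}: the full-dimensional noisy POVMs $\mathcal N_\lambda[A], \mathcal N_\lambda[B]$ are incompatible whenever $\lambda > \tfrac{1}{2}\bigl(1 + \tfrac{1}{1+\sqrt d}\bigr)$.

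For the right endpoint, I would invoke Theorem~\ref{thm:asym-cloning-criterion} in dimension $r$. The MUB relation $|\langle c_j|a_i\rangle|^2 = 1/d$ gives immediately $\|V^*\ket{a_i}\|^2 = r/d$ (and the same for $b_i$), so each operator $V^*\ketbra{a_i}{a_i}V$ on $\C^r$ is rank one with trace $r/d$, and its minimum eigenvalue is $0$ as soon as $r \geq 2$. By Lemma~\ref{lem:noise-isometry}, the reduced effects are $V^*\mathcal N_\lambda[A]_iV = \lambda V^*\ketbra{a_i}{a_i}V + (1-\lambda)I_r/d$, with trace $r/d$ and minimum eigenvalue $(1-\lambda)/d$. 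Substituting into the formula $s_x = 1 - \min_i r\lambda_{\min}(\cdot)/\Tr(\cdot)$ of Theorem~\ref{thm:asym-cloning-criterion} yields $s_1 = s_2 = \lambda$. The optimal symmetric cloning bound computed right after Theorem~\ref{thm:asym-cloning-criterion} gives $(\lambda,\lambda) \in \Gamma^{clone}(2,r)$ precisely when $\lambda \leq \tfrac{r+2}{2(r+1)}$, in which case the reduced POVMs are compatible.

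Combining both endpoints, the natural candidate is
\begin{equation*}
\Lambda_{r,d} := \left(\tfrac{1}{2}\bigl(1 + \tfrac{1}{1+\sqrt d}\bigr),\; \tfrac{r+2}{2(r+1)}\right].
\end{equation*}
Non-emptiness reduces to $\tfrac{1}{1+\sqrt d} < \tfrac{1}{r+1}$, i.e.\ $r+1 < 1+\sqrt d$, which is exactly the hypothesis $r < \sqrt d$; the lower bound $r \geq 2$ is what ensures the rank-one reduced effects actually have $\lambda_{\min} = 0$ rather than a positive minimum eigenvalue.

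The main obstacle, I expect, is conceptual rather than computational: identifying the ``right'' isometry. Truncating a third MUB is natural precisely because mutual unbiasedness forces every reduced effect to share the same trace $r/d$, producing a uniform value of $s_x$ across the index $i$, while rank-oneness pushes the minimum eigenvalue all the way to zero so that the entire ``budget'' for the cloning criterion is supplied by the white-noise component of the POVM. Any isometry not aligned with an MUB would break this uniformity, force an asymmetric point of $\Gamma^{clone}(2,r)$, and destroy the clean closed-form right endpoint.
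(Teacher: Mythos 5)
Your proposal is correct and follows essentially the same route as the paper: the same isometry $V=\sum_{k=1}^r \ketbra{c_k}{k}$ obtained by truncating a third MUB, the same computations $\Tr\bigl(V^*\mathcal N_\lambda[A]_iV\bigr)=r/d$ and $\lambda_{\min}\bigl(V^*\mathcal N_\lambda[A]_iV\bigr)=(1-\lambda)/d$, and the same interval $\Lambda_{r,d}=\left(\tfrac{2+\sqrt d}{2(1+\sqrt d)},\tfrac{2+r}{2(1+r)}\right]$, nonempty exactly when $2\leq r<\sqrt d$. The only cosmetic difference is that you invoke Theorem~\ref{thm:asym-cloning-criterion} together with the symmetric point of $\Gamma^{clone}(2,r)$, whereas the paper applies Proposition~\ref{prop:clone} (in dimension $r$) directly, which the paper itself notes is the symmetric specialization of that theorem.
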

Before giving the proof of the theorem, note that a triple of MUBs exists in every dimension, see \cite{klappenecker2003constructions,combescure2007mutually}.
\begin{proof}
    Consider a third basis $\{\ket{c_k}\}_{k=1}^d$ of $\mathbb C^d$ such that $\{a_i\}$, $\{b_j\}$, and $\{c_k\}$ form a set of three mutually unbiased bases. We define $V : \C^r \to \C^d$ as $V=\sum^{r}_{k=1}\ket{c_k}\bra{k}$; it is clear that $V$ is an isometry. 
	
	Note first that the range of parameters $\lambda$ for which the noisy POVMs $\mathcal N_{\lambda}[A]$, $\mathcal N_{\lambda}[B]$ are incompatible was computed in Eq.~\eqref{eq:noisy-MUB-incompatible}: 
	$$\frac{1}{2}\left(1+\frac{1}{1+\sqrt{d}}\right) < \lambda \leq 1.$$
	
	We shall now compute the range of the parameter $\lambda$ for which we can use Proposition \ref{prop:clone} in its symmetric version for the reduced POVMs $V^*\mathcal N_{\lambda}[A]V$ and $V^*\mathcal N_{\lambda}[B]V$ to certify their compatibility. Let us first calculate, for $i \in [d]$, $\lambda_{\min}(V^* \mathcal N_{\lambda}[A]_i V)$:
	
	$$\lambda_{\min}(V^* \mathcal N_{\lambda}[A]_i V)=\frac{1-\lambda}{d}+\lambda \cdot \lambda_{\min}\left[\sum_{k,l=1}^r\braket{c_k|a_i}\braket{a_i|c_l}\ketbra{k}{l}\right].$$
	Note that the operator in the bracket above has unit rank, hence the second term is null. We have thus $\lambda_{\min}(V^* \mathcal N_{\lambda}[A]_i V)=\frac{1-\lambda}{d}$, for all $i \in [d]$. A simple calculation gives $$\Tr V^* \mathcal N_{\lambda}[A]_i V=\frac{r}{d}.$$
	The same calculation can be performed, and the same result is obtained, for $V^*\mathcal N_{\lambda}[B]V$. 	Putting these together, we find that: 
	
	$$\lambda \leq \frac{2+r}{2(1+r)} \implies \begin{cases}
	\lambda_{\min}(V^*\mathcal N_{\lambda}[A]_iV) \geq \frac{1}{2(1+r)} \Tr V^*\mathcal N_{\lambda}[A]_i V &\qquad \forall i \in [d]\\
	\lambda_{\min}(V^*\mathcal N_{\lambda}[B]_j V) \geq \frac{1}{2(1+r)} \Tr V^*\mathcal N_{\lambda}[B]_j V&\qquad \forall j \in [d],
	\end{cases}$$
    showing that the assumptions of Proposition \ref{prop:clone} hold, and thus that the POVMs $V^*\mathcal N_{\lambda}[A]V$ and $V^*\mathcal N_{\lambda}[B]V$ are compatible for the respective range of $\lambda$. 
	
	Define now the interval
	\begin{equation}\label{eq:interval}
	    \Lambda_{r,d} := \left( \frac{2+\sqrt d}{2(1+\sqrt d)}, \frac{2+r}{2(1+r)}\right].
	\end{equation}
	From the computations above, we know that for all $\lambda \in \Lambda_{r,d}$, the POVMs satisfy the two points in the statement; the interval $\Lambda_{r,d}$ is non-empty as soon as $2\leq r < \sqrt d$.
\end{proof}

Let us now consider the asymmetric version of Theorem \ref{thm:reduce-MUBs}, where the amount on white noise added to each POVM can be different. We first introduce a generalization of the compatibility regions from \cite[Section III]{bluhm2018joint} and \cite[Definition 3.32]{bluhm2020compatibility}. 

\begin{definition}
Given a $g$-tuple $\mathbf A$ of $d$-dimensional POVMs, we define its \emph{restricted compatibility region} to be the subset 
	\begin{align*}[0,1]^g \ni \Delta(\mathbf A; r) = \{\mathbf s\in[0,1]^g \, : \,  &\exists V : \mathbb C^r \to \mathbb C^d \text{ s.t.~the reduced POVMs } V^* \mathcal N_{s_1}[A^{(1)}]V,\\
	& \quad V^* \mathcal N_{s_2}[A^{(2)}] V, \ldots,V^* \mathcal N_{s_g}[A^{(g)}] V \text{ are compatible} \}. 
	\end{align*}
\end{definition}

Using the generalization of the cloning criterion to asymmetric noise parameters from Theorem \ref{thm:asym-cloning-criterion}, we prove the following lower bound for the compatibility regions $\Delta(\mathbf A, r)$ for tuples of MUBs. 

\begin{proposition}
For any $g$-tuple of MUBs $\mathbf A$ which can be extended to a $(g+1)$-tuple of MUBs, we have 
	 $\Gamma^{clone}(g,r)\subseteq\Delta(\mathbf A;r)$.
\end{proposition}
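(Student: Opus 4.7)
The plan is to deduce the result from the asymmetric cloning criterion (Theorem~\ref{thm:asym-cloning-criterion}) applied to the reduced POVMs, exploiting the extra MUB to construct the isometry, just as in the symmetric version (Theorem~\ref{thm:reduce-MUBs}).

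Fix $\mathbf s\in\Gamma^{clone}(g,r)$. Let $\{\ket{c_k}\}_{k=1}^d$ be the basis extending $\mathbf A=(A^{(1)},\ldots,A^{(g)})$ (where $A^{(x)}_i=\ketbra{a_i^{(x)}}{a_i^{(x)}}$) to a $(g+1)$-tuple of MUBs, and define the isometry $V:\mathbb C^r\to\mathbb C^d$ by $V=\sum_{k=1}^r \ket{c_k}\bra{k}$, exactly as in the proof of Theorem~\ref{thm:reduce-MUBs}. The goal is to show that the reduced noisy tuple $V^*\mathcal N_{\mathbf s}[\mathbf A]V$ is a compatible tuple of POVMs on $\mathcal M_r$.

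By Lemma~\ref{lem:noise-isometry}, we have $V^*\mathcal N_{s_x}[A^{(x)}]V=\mathcal N_{s_x}[V^*A^{(x)}V]$, so each effect takes the form
\[
V^*\mathcal N_{s_x}[A^{(x)}]_iV \;=\; s_x\,V^*\ketbra{a_i^{(x)}}{a_i^{(x)}}V+(1-s_x)\frac{I_r}{d}.
\]
The matrix $V^*\ketbra{a_i^{(x)}}{a_i^{(x)}}V$ has rank one, so for $r\geq 2$ its minimal eigenvalue vanishes, giving $\lambda_{\min}(V^*\mathcal N_{s_x}[A^{(x)}]_iV)=(1-s_x)/d$. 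Using the MUB relation $|\braket{c_k|a_i^{(x)}}|^2=1/d$, the trace computes to $\Tr V^*\ketbra{a_i^{(x)}}{a_i^{(x)}}V=\sum_{k=1}^r |\braket{c_k|a_i^{(x)}}|^2=r/d$, hence $\Tr V^*\mathcal N_{s_x}[A^{(x)}]_iV=r/d$.

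Now apply Theorem~\ref{thm:asym-cloning-criterion} to the $g$-tuple $\mathbf B := V^*\mathcal N_{\mathbf s}[\mathbf A]V$ of POVMs on $\mathcal M_r$. The cloning parameter appearing in that theorem is
\[
\tilde s_x \;=\; 1 - \min_{i\in[d]}\frac{r\,\lambda_{\min}(B^{(x)}_i)}{\Tr B^{(x)}_i} \;=\; 1 - \frac{r(1-s_x)/d}{r/d} \;=\; s_x,
\]
so $\tilde{\mathbf s}=\mathbf s\in\Gamma^{clone}(g,r)$ by assumption. Theorem~\ref{thm:asym-cloning-criterion} then yields the compatibility of $\mathbf B$, i.e.\ $\mathbf s\in\Delta(\mathbf A;r)$, as required. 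There is no substantial obstacle: the argument is essentially a parameter-by-parameter rerun of the symmetric proof, with the MUB property doing the decisive work by making the traces independent of $i$ and $x$, so that the $\lambda_{\min}/\Tr$ ratios reproduce the original noise parameters.
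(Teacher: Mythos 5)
Your proof is correct and follows essentially the same route as the paper: truncate the extra MUB to get the isometry $V$, compute $\lambda_{\min}=(1-s_x)/d$ and $\Tr=r/d$ for the reduced noisy effects, and feed these into the asymmetric cloning criterion of Theorem~\ref{thm:asym-cloning-criterion}. Your explicit check that the criterion's parameter for the reduced POVMs on $\mathcal M_r$ (with the factor $r$, not $d$) comes out exactly equal to $s_x\in\Gamma^{clone}(g,r)$ is precisely the intended argument, stated a bit more carefully than in the paper's own write-up.
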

\begin{proof}
	Let $\mathbf s\in\Gamma^{clone}(g,r)$, and consider the isometry $V:=\sum_{k=1}^r \ketbra{c_k}{k}$, where $\{\ket{c_k}\}_{k=1}^d$ is the $(g+1)$-th MUB from the statement. To conclude, it is enough to verify the assumptions of Theorem \ref{thm:asym-cloning-criterion}. The computations here are similar to the ones from Theorem \ref{thm:reduce-MUBs}. We have, for all $x \in [g]$ and $i \in [d]$,
	\begin{align*}
	    \lambda_{\min}(V^* \mathcal N_{s_x}[A^{(x)}]_i V) &= \frac{1-s_x}{d}\\
	     \Tr (V^* \mathcal N_{s_x}[A^{(x)}]_i V)&=\frac{r}{d}.
	\end{align*}
	Hence, 
	$$s_x = 1 - \min_{i \in [k_x]} \frac{d\lambda_{\min}(V^* \mathcal N_{s_x}[A^{(x)}]_i V)}{\Tr (V^* \mathcal N_{s_x}[A^{(x)}]_i V)}$$
	satisfies the hypothesis of Theorem \ref{thm:asym-cloning-criterion}.
\end{proof}

We leave the question of deriving upper bounds for the sets $\Delta(\mathbf{A}, r)$ open.

\section{Algebraic considerations}\label{sec:algebraic}

A simple way of using dimension reduction to render incompatible measurements compatible is to ensure that, after the reduction, the POVM elements of the measurements are commutative. Moreover, in the case of 2 POVMs, one can push this idea even further and render one of the reduced POVMs trivial, ensuring thus compatibility. The overarching theme of this section is to use the two algebraic characterizations of compatibility (commutativity and trivial POVMs) to obtain very general dimension reduction results. The price to pay for this generality is that, for some very specific situations, the results can be relatively weak, when compared with more specialized techniques, such as the ones from Sections \ref{sec:2-bases} and \ref{sec:MUB}. 

\bigskip
We start with a dimension reduction method by which POVMs are rendered commutative (and thus compatible). The following construction has been introduced in \cite[Theorem 3]{knill2000theory} and further refined in \cite[Proposition 2.4]{li2011generalized}. The connection with quantum error correction can be understood as follows: on the code space, the POVM channels act like the identity (up to a scalar), hence the reduced POVMs are trivial.  

For the sake of completeness, we recall it here in full details and adapt it to our setting, emphasizing the intermediate step related to commutative POVMs.  

\begin{definition}\label{def:commutativity-dimension}
	For a $g$-tuple of POVMs $\mathbf A=(A^{(1)}, \ldots, A^{(g)})$ on $\M_d$, we define their \emph{commutativity dimension} as
	\begin{align*}
		T(\mathbf A) &:= \max\{ r \in [d] \, : \, \exists V : \C^r \to \C^d \text{ isometry s.t. }\\
		\nonumber &\forall x \neq y \in [g], \, \forall i \in [k_x], \, \forall j \in [k_y],\quad  [V^*A^{(x)}_i V, V^*A^{(y)}_j V]=0\}. 
	\end{align*}
\end{definition}

We recall the following result from \cite{li2011generalized}, showing that tuples of matrices can be reduced to commutative operators, when the dimension is large enough. 
\begin{proposition}\label{prop:LB-C}
	Consider $m$ self-adjoint $d \times d$ matrices $A_1, \ldots, A_m$ and let 
	$$n+1 = \dim \operatorname{span}_{\mathbb R}\{A_1, \ldots, A_m, I_d\}.$$ 
	If $d \geq (n+1)(r-1)$, then there exist $r$ orthonormal vectors $x_1, \ldots, x_r \in \mathbb C^d$ such that, for all $s \in [m]$,  $\braket{x_i | A_s x_j} = 0$, whenever $i \neq j \in [r]$. In other words, the matrices $A$ are diagonal when restricted to the span of the vectors $\{x_1, x_2, \ldots, x_r\}$.
\end{proposition}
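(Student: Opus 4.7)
The plan is to prove the statement by induction on $r$, constructing the orthonormal family $x_1, \ldots, x_r$ one vector at a time. The base case $r=1$ is vacuous. For the inductive step, I would assume that orthonormal $x_1, \ldots, x_k$ with $k < r$ have already been built satisfying $\braket{x_i | A_s x_j} = 0$ for all $s\in[m]$ and $i \neq j$ in $[k]$, and then produce a new unit vector $x_{k+1}$ preserving this property.

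The key first move is to fix once and for all a basis $\{B_0 = I_d, B_1, \ldots, B_n\}$ of the real span $W := \operatorname{span}_{\mathbb R}\{A_1, \ldots, A_m, I_d\}$, which has dimension $n+1$ by hypothesis. By $\mathbb R$-linearity, the requirements $\braket{x_i | A_s x_{k+1}} = 0$ (for all $s\in[m]$) together with the orthogonality $\braket{x_i | x_{k+1}} = 0$ are jointly equivalent to the single family of conditions $\braket{x_i | B_a x_{k+1}} = 0$ for all $a \in \{0, \ldots, n\}$ and $i \in [k]$. Since each $B_a$ is self-adjoint, this amounts to demanding that $x_{k+1}$ be orthogonal (in the standard Hermitian inner product on $\C^d$) to every vector of the finite family $\mathcal F_k := \{B_a x_i : a \in \{0, \ldots, n\},\, i\in[k]\}$, whose cardinality is at most $(n+1)k \leq (n+1)(r-1)$. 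Under the dimensional assumption, the complex span of $\mathcal F_k$ is a proper subspace of $\C^d$, so a unit vector $x_{k+1}$ exists in its orthogonal complement, closing the induction. The diagonality conclusion is then immediate, since the isometry $V:\C^r\to\C^d$ with columns $x_1, \ldots, x_r$ satisfies $(V^*A_sV)_{ij} = \braket{x_i | A_s x_j} = 0$ for $i\neq j$.

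The one point I expect to require extra care is the borderline equality case $d = (n+1)(r-1)$, where the crude cardinality count only guarantees that $\operatorname{span}\mathcal F_k$ has complex dimension at most $d$, rather than strictly less, leaving open the possibility that it fills the ambient space at the final step. I would handle this by choosing the earlier vectors $x_1, \ldots, x_{k}$ non-generically within their respective orthogonal complements so that the $B_a x_i$ acquire a nontrivial linear dependence: the set of non-generic choices can be identified with the zero locus of a non-identically-vanishing polynomial (a suitable maximal minor of the matrix with columns $B_a x_i$), hence a nonempty Zariski-closed subset, and the required genericity argument closes the remaining edge case. Alternatively, the whole induction goes through cleanly under the very mildly stronger hypothesis $d > (n+1)(r-1)$, which is the natural bound furnished by the counting argument.
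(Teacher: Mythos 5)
Your greedy construction is the right approach, and for $d > (n+1)(r-1)$ it is complete: passing to a Hermitian basis $\{B_0=I, B_1,\ldots,B_n\}$ of the real span, each new vector only needs to avoid the complex span of at most $(n+1)k\le(n+1)(r-1)$ vectors. (For comparison, the paper does not reprove the statement: it merely reduces to a basis of the span and invokes the cited result of Li and Poon, so a self-contained argument is genuinely different in character.) The genuine gap is exactly where you suspect it: the borderline case $d=(n+1)(r-1)$, and your sketch does not close it as written. First, the inference ``zero locus of a non-identically-vanishing polynomial, hence nonempty Zariski-closed'' is not valid by itself; what is actually needed is that the relevant minor, \emph{restricted} to the admissible set for the vector being adjusted (the orthocomplement of the previously accumulated family), either vanishes identically (which is fine) or is a homogeneous polynomial of positive degree in at least two complex variables, and neither the restriction step nor the dimension count of that admissible subspace appears in your argument. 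Second, if the dependence is forced at an early step, the matrix with columns $B_a x_i$ is $d\times(n+1)k$ with $(n+1)k<d$, so a rank drop is the simultaneous vanishing of \emph{all} maximal minors, not of a single one, and the nonemptiness of that locus is not justified; only at the penultimate step (equality case) is the matrix square. Third, your fallback hypothesis $d>(n+1)(r-1)$ proves a strictly weaker statement, and the equality case is precisely what the paper needs downstream: the qutrit corollary after Theorem \ref{thm:reduce-commutative} has $d=3$, $n+1=3$, $r=2$, i.e.\ $d=(n+1)(r-1)$.

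The equality case can be closed with one small observation, simpler than the genericity argument. If $n\ge 1$, pick a non-scalar basis element, say $B_1$, and take $x_1$ to be a unit eigenvector of the Hermitian matrix $B_1$. Then $B_1x_1\in\C x_1$, so the first family $\{x_1,B_1x_1,\ldots,B_nx_1\}$ spans at most $n$ dimensions instead of $n+1$; consequently, at every later step $k+1\le r$ your family $\mathcal F_k$ spans at most $n+(n+1)(k-1)\le(n+1)(r-1)-1=d-1$ dimensions, and a unit vector orthogonal to it exists, equality case included. (If $n=0$, all $A_s$ are scalar and any orthonormal family works, under the implicit assumption $r\le d$.) Alternatively, your Zariski route can be repaired at the penultimate step: the $d\times d$ determinant is homogeneous of degree $n+1$ in $x_{r-1}$, and the admissible subspace for $x_{r-1}$ has complex dimension at least $d-(n+1)(r-2)=n+1\ge 2$, so the restricted polynomial either vanishes identically or has a nontrivial zero; but these verifications must be made explicit for the proof to stand.
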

\begin{proof}
One can observe that if $\{B_1,\ldots,B_n\}$ is a basis of $\operatorname{span}_{\mathbb R}\{A_1, \ldots, A_m, I_d\}$, then there exist $r$ orthogonal vectors $x_1,\ldots,x_r$ such that $\braket{x_i|A_s x_j}=\lambda_s \delta_{ij}$ for all $i,j \in [r]$ and $s\in [m]$ iff the same holds true for the matrices $B_1,\ldots,B_n$. The result follows then from the first part of \cite[Proposition 2.4]{li2011generalized}.
\end{proof}

We shall now use the result above for the set of effects of a $g$-tuple of POVMs, to find an isometry reducing them to commuting POVMs. The following theorem combines Definition \ref{def:commutativity-dimension} with the lower bound from Proposition \ref{prop:LB-C}.

\begin{theorem}\label{thm:reduce-commutative}
	Consider a $g$-tuple $\mathbf A = (A^{(1)}, \ldots, A^{(g)})$, where $A^{(x)} = (A^{(x)}_1, \ldots , A^{(x)}_{k_x})$ is a POVM with $k_x$ outcomes. Let
	$$n+1 := \dim \operatorname{span}_{\mathbb R}\{A^{(x)}_i\}_{x\in[g], i \in [k_x]} \leq 1-g+\sum_{x=1}^g k_x.$$ 
	Then, we have the following lower bound:
	\begin{equation}\label{eq:LB-comm-POVM}
	R(\mathbf A) \geq T(\mathbf A) \geq 1+\left\lfloor \frac{d}{n+1} \right\rfloor \geq 1+\left\lfloor \frac{d}{1-g+\sum_{x=1}^g k_x} \right\rfloor.
	\end{equation}
\end{theorem}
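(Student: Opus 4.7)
The first inequality $R(\mathbf A) \geq T(\mathbf A)$ is immediate from the fact that commuting POVMs are compatible (their elementwise products form a valid joint POVM). The bulk of the proof concerns the second inequality, and the last one is a short counting argument.

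For the core bound $T(\mathbf A) \geq 1+\lfloor d/(n+1)\rfloor$, the plan is to feed the flat list of effect operators $\{A^{(x)}_i\}_{x\in[g],\,i\in[k_x]}$ into Proposition \ref{prop:LB-C}. That result, applied with $m = \sum_x k_x$ self-adjoint matrices whose real span (together with $I_d$, which is redundant since $I_d=\sum_i A^{(x)}_i$) has dimension $n+1$, gives the following: for every integer $r$ satisfying $d \geq (n+1)(r-1)$, there are orthonormal vectors $x_1,\dots,x_r \in \mathbb C^d$ with $\langle x_i\,|\, A^{(x)}_s x_j\rangle = 0$ for $i\neq j$. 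The largest such $r$ is precisely $1+\lfloor d/(n+1)\rfloor$.

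I would then turn these vectors into an isometry $V:\mathbb C^r\to\mathbb C^d$ by setting $V = \sum_{i=1}^r \ketbra{x_i}{i}$. By construction, each $V^* A^{(x)}_i V$ is a diagonal matrix in $\mathcal M_r$, so the whole family $\{V^*A^{(x)}_i V\}_{x,i}$ pairwise commutes. This witnesses $T(\mathbf A) \geq r = 1+\lfloor d/(n+1)\rfloor$, and hence $R(\mathbf A)\geq T(\mathbf A) \geq 1+\lfloor d/(n+1)\rfloor$ as well.

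The final inequality amounts to bounding $n+1$ from above by $1-g+\sum_x k_x$. The effects come in $g$ groups, with each group $\{A^{(x)}_1,\dots,A^{(x)}_{k_x}\}$ summing to $I_d$. These $g$ relations $\sum_i A^{(x)}_i = I_d$ are not independent; they collectively impose exactly $g-1$ independent linear dependencies among the $\sum_x k_x$ effects (for instance, $\sum_i A^{(1)}_i - \sum_i A^{(x)}_i = 0$ for $x=2,\dots,g$). Hence the real span has dimension at most $\sum_x k_x - (g-1)$, and since $I_d$ already lies in this span, this is exactly the bound on $n+1$, giving the last inequality via monotonicity of $r \mapsto 1+\lfloor d/r \rfloor$ in the denominator. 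I do not anticipate a serious obstacle: the invocation of Proposition \ref{prop:LB-C} is a black box, and the only thing to be mildly careful about is confirming that the existence of the $r$ orthogonal vectors making all $A^{(x)}_i$ simultaneously ``off-diagonal-free'' really yields a commuting family after conjugation by $V$, which is immediate since simultaneously diagonal matrices commute.
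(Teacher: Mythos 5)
Your proof is correct and follows essentially the same route as the paper: $R(\mathbf A)\geq T(\mathbf A)$ because commuting reduced POVMs admit the product joint measurement, and the quantitative bound comes from applying Proposition \ref{prop:LB-C} to the flat list of effects (noting $I_d$ is already in their span) and turning the resulting orthonormal vectors into an isometry. You merely spell out details the paper leaves implicit, such as the counting argument giving $n+1\leq 1-g+\sum_x k_x$, and these details are all sound.
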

\begin{proof}
For any $r \leq T(\mathbf A)$, there exists an isometry $V : \C^r \to \C^d$ such that the reduced effect operators $V^*A^{(x)}_iV \in \M_r$ commute with $V^*A^{(y)}_j V$ for all $i,j$ and $x\neq y$. In particular, the reduced POVMs $V^*A^{(x)}V$ are compatible: $R(\mathbf A) \geq T(\mathbf A)$. The second and third inequalities in \eqref{eq:LB-comm-POVM} follow from Proposition \ref{prop:LB-C}.
\end{proof}

\begin{remark}
	In the case where $n \geq d$, the lower bound \eqref{eq:LB-comm-POVM} is trivial. 
\end{remark}

\begin{remark}
	In the definition of $T(\mathbf A)$ we only ask that reduced effects from different POVMs commute, while the use of Proposition \ref{prop:LB-C} guarantees that all the reduced effects commute. It would be interesting to find out whether one can  gain something by exploiting this fact. 
\end{remark}

Let us illustrate the previous result by the following striking corollary, corresponding to the case $d=3$, $g=2$, $k_1 = k_2 = 2$.
\begin{corollary}
	Any pair of qutrit effects can be reduced to a pair of commuting (and thus compatible) qubit effects. 
\end{corollary}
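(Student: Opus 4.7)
The plan is to apply Theorem \ref{thm:reduce-commutative} directly, after completing each effect to a two-outcome POVM. Given a pair of qutrit effects $E, F$ (i.e.\ operators on $\mathbb{C}^3$ with $0 \leq E, F \leq I_3$), I would set $A^{(1)} = (E, I_3 - E)$ and $A^{(2)} = (F, I_3 - F)$, obtaining a $g=2$ tuple of POVMs on $\M_3$ with $k_1 = k_2 = 2$ outcomes.

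Next, I would estimate the parameter $n+1$. The real span of the four effect operators $\{E, I_3 - E, F, I_3 - F\}$ equals the real span of $\{E, F, I_3\}$, so
$$n+1 = \dim \operatorname{span}_{\mathbb R}\{E, F, I_3\} \leq 3.$$
Plugging $d = 3$ and this bound on $n+1$ into Theorem \ref{thm:reduce-commutative} yields
$$R(A^{(1)}, A^{(2)}) \geq T(A^{(1)}, A^{(2)}) \geq 1 + \left\lfloor \frac{3}{n+1} \right\rfloor \geq 1 + \left\lfloor \frac{3}{3} \right\rfloor = 2,$$
so there exists an isometry $V : \mathbb{C}^2 \to \mathbb{C}^3$ such that $V^* E V$ and $V^* F V$ commute in $\M_2$.

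Finally, I would conclude by observing that the reduced operators are themselves qubit effects, since conjugation by an isometry preserves positivity and the order relation $\cdot \leq I$. Two commuting effects are simultaneously diagonalizable, hence in particular compatible (one can take their joint measurement to be built from the products of their spectral projections). No nontrivial obstacle arises: the entire argument reduces to verifying that with $d=g+1=3$ the relevant floor evaluates to $1$, which is precisely the borderline case covered by the algebraic lower bound.
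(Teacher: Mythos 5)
Your proposal is correct and follows exactly the paper's route: the corollary is just Theorem \ref{thm:reduce-commutative} specialized to $d=3$, $g=2$, $k_1=k_2=2$, where the span of the four effects equals $\operatorname{span}_{\mathbb R}\{E,F,I_3\}$, so $n+1\leq 3$ and the bound gives $T(\mathbf A)\geq 1+\lfloor 3/3\rfloor=2$. Your additional remarks (isometries preserve effects, commuting effects are compatible) are accurate and consistent with the paper's argument.
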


\begin{example}\label{ex:qutrit-effects}
	Let us consider the following two qutrit effects, built from the computational and the Fourier bases in $\C^3$:
	$$E = \ketbra{1}{1} + \frac{\ketbra{2}{2}}{2} \qquad \qquad F = \ketbra{f_1}{f_1} + \frac{\ketbra{f_2}{f_2}}{2},$$
	where $f_{1,2,3}$ are the columns of the Fourier matrix 
	$$F_3 = \frac{1}{\sqrt 3}\begin{bmatrix}
	1 & 1 & 1 \\
	1 & \omega & \omega^2 \\
	1 & \omega^2 & \omega
	\end{bmatrix},$$
	with $\omega = \exp(2 \pi \mathrm{i}/3)$, see also Example \ref{ex:Fourier}. The fact that the effects $E,F$ are incompatible (that is, the POVMs $(E,I_3-E)$ and $(F, I_3-F)$ are incompatible) follows from the following semidefinite program \cite{boyd2004convex}:
	\begin{align*}
	\text{minimize} \quad &\lambda\\
	\text{subject to} \quad &X \geq 0\\
	&X \leq E\\
	&X \leq F\\
	&\lambda I_3 + X \geq E+F.
    \end{align*}
    In the SDP above, the variable $X$ corresponds to the single free value of a joint POVM for $E,F$. The effects $E,F$ are compatible if and only if the value of the SDP above is smaller or equal than one \cite[Eq.~(4)]{wolf2009measurements}. For our choice of $E,F$, it can be seen numerically that the value of the program is $\approx 1.577$, certifying the incompatibility of $E$ and $F$. 
    
 We choose the isometry
    $$V = \begin{bmatrix}
    1 & 0\\
    0 & \frac{1}{\sqrt{2}}\\
    0 & \frac{\omega}{\sqrt{2}}
    \end{bmatrix},$$
	for which the reduced effects read
	$$V^*EV = \begin{bmatrix}
    1 & 0\\
    0 & 1/4
    \end{bmatrix} \qquad   \text{ and } \qquad V^*FV=\begin{bmatrix}
    1/2 & 0\\
    0 & 1/2
    \end{bmatrix}.$$
    The reduced effects are commutative, hence compatible. 
\end{example}

\bigskip

We now move on to another method by which incompatible POVMs can be rendered compatible by dimension reduction. This time, we shall consider a single POVM and ``trivialize'' it by reducing it with an isometry. In the language of error correction, we are constructing a subspace of the Hilbert space on which the measurement channel acts like the identity. 

\begin{definition}
	Given a single POVM $A$ with $k$ outcomes on $\M_d$, its \emph{scalar dimension} is 
	$$S(A) := \max\{ r \in [d] \, : \, \exists V : \C^r \to \C^d \text{ isom.~s.t. } \forall i \in [k], \quad V^*A_iV \sim I_r\}.$$
\end{definition}

The definition above is related to the notion of \emph{higher rank (joint) numerical range} introduced in \cite{choi2006higher} for one matrix and generalized in \cite{li2011generalized} for several matrices. We recall the following lower bound from \cite[Proposition 2.4]{li2011generalized}, which uses Tverberg's theorem \cite{tverberg1966generalization} (see also \cite{barany2016tverbergs}) to render the diagonal matrices from Proposition \ref{prop:LB-C} multiples of the identity. 

\begin{proposition}\label{prop:LB-S}
	Consider $m$ self-adjoint $d \times d$ matrices $A_1, \ldots, A_m$ and let 
	$$n+1 = \dim \operatorname{span}_{\mathbb R}\{A_1, \ldots, A_m, I_d\}.$$ 
	If $d \geq (n+1)^2(r-1)$, then there exist $r$ orthonormal vectors $x_1, \ldots, x_r \in \mathbb C_d$ such that, for all $s \in [m]$, there exists a scalar $\lambda_s \in \R$ such that $\braket{x_i | A_s x_j} = \delta_{ij} \lambda_s$, for all $i,j \in [r]$.
\end{proposition}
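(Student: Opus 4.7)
The plan is to compose two steps: first apply Proposition \ref{prop:LB-C} to obtain a large family of orthonormal vectors on which the $A_s$ act diagonally, and then invoke Tverberg's theorem to combine those vectors into $r$ new orthonormal ones on which each $A_s$ acts as a scalar multiple of the identity.

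First, I would fix a real basis of the form $\{I_d, B_1, \ldots, B_n\}$ for $\operatorname{span}_{\mathbb R}\{A_1, \ldots, A_m, I_d\}$ and set $R := (r-1)(n+1)+1$. The hypothesis $d \geq (n+1)^2(r-1) = (n+1)(R-1)$ is exactly what is needed to invoke Proposition \ref{prop:LB-C} on the family $A_1, \ldots, A_m$, yielding orthonormal vectors $y_1, \ldots, y_R \in \mathbb C^d$ with $\braket{y_i | A_s y_j}=0$ whenever $i \neq j$. This reduces the problem to a purely commutative one on $\operatorname{span}\{y_1,\ldots,y_R\}$.

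Next, I would assign to each $y_i$ the point $p_i := \left(\braket{y_i|B_1 y_i}, \ldots, \braket{y_i|B_n y_i}\right) \in \mathbb R^n$, using only the non-identity basis elements (the entry $\braket{y_i|I_d y_i}=1$ carries no information). Since $R = (r-1)(n+1)+1$ is precisely the Tverberg number in $\mathbb R^n$, Tverberg's theorem yields a partition $[R] = S_1 \sqcup \cdots \sqcup S_r$, convex weights $(\alpha_{j,i})_{i \in S_j}$ summing to one for each $j$, and a common point $c \in \mathbb R^n$ satisfying $\sum_{i \in S_j}\alpha_{j,i} p_i = c$ for all $j \in [r]$. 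Defining $x_j := \sum_{i \in S_j} \sqrt{\alpha_{j,i}}\, y_i$ and using the disjointness of the $S_j$ together with the orthonormality of the $y_i$, the vectors $(x_j)_{j \in [r]}$ form an orthonormal system. A direct computation shows that the off-diagonal terms $\braket{x_j|A_s x_{j'}}$ vanish for $j \neq j'$ thanks to the diagonal property obtained in the first step, while $\braket{x_j|A_s x_j}$ equals an affine function of $c$ (through the coordinates of $A_s$ in the basis $\{I_d, B_1, \ldots, B_n\}$) and hence is independent of $j$.

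The main subtlety is the dimensional bookkeeping between the two steps: Proposition \ref{prop:LB-C} operates in the full real span of dimension $n+1$, whereas Tverberg must be applied in $\mathbb R^n$ after quotienting out the trivial direction corresponding to $I_d$. This mismatch is exactly what produces the factor $(n+1)^2$ in the hypothesis: one needs $R = (r-1)(n+1)+1$ vectors from Proposition \ref{prop:LB-C}, which in turn requires $d \geq (n+1)(R-1) = (n+1)^2(r-1)$.
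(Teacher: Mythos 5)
Your argument is correct and is essentially the approach the paper relies on: the paper simply defers to the second part of the proof of Li--Poon's Proposition 2.4 (after the same basis-reduction observation as in Proposition \ref{prop:LB-C}), which is exactly the combination you reconstruct --- apply Proposition \ref{prop:LB-C} to get $R=(r-1)(n+1)+1$ jointly ``diagonal'' orthonormal vectors and then use Tverberg's theorem in $\mathbb{R}^n$ to mix them into $r$ vectors on which each $A_s$ acts as a scalar. Your dimensional bookkeeping $d \geq (n+1)(R-1)=(n+1)^2(r-1)$ and the construction $x_j=\sum_{i\in S_j}\sqrt{\alpha_{j,i}}\,y_i$ are exactly right, so the proposal stands as a self-contained version of the cited proof.
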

\begin{proof}
The statement follows from the second part of the proof of \cite[Proposition 2.4]{li2011generalized}, by making the same observation as in the proof of Proposition \ref{prop:LB-C}.
\end{proof}

We can gather the results above in the following theorem.

\begin{theorem}\label{thm:reduce-trivial}
	Consider a pair of POVMs $A,B$ on $\M_d$. Let $k$ be the number of outcomes of the POVM $A$, and define
	$$n+1 := \dim \operatorname{span}_{\mathbb R}\{A_i\}_{i \in [k]} \leq k.$$ 
	We have the following lower bound:
	\begin{equation}\label{eq:LB-scalar-POVM}
	R(A,B) \geq S(A) \geq 1+\left\lfloor \frac{d}{(n+1)^2} \right\rfloor \geq 1+\left\lfloor \frac{d}{k^2} \right\rfloor.
	\end{equation}
\end{theorem}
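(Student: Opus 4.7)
The plan is to establish the three inequalities in sequence; the first and third are essentially formal, while the substance of the theorem lies in the middle one and simply reuses Proposition \ref{prop:LB-S}.

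For the first inequality $R(A,B) \geq S(A)$, I would take an isometry $V : \C^r \to \C^d$ with $r = S(A)$ that realizes the definition of the scalar dimension, so that $V^*A_iV = \lambda_i I_r$ for nonnegative scalars $\lambda_i$ with $\sum_i \lambda_i = 1$. Then $V^*AV$ is a trivial POVM on $\M_r$, which is automatically compatible with any other POVM on $\M_r$; compatibility with $V^*BV$ is certified by the explicit joint measurement $C_{ij} := \lambda_i V^*B_jV$, whose marginals are immediate to check. This yields $R(A,B) \geq r = S(A)$.

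For the middle inequality I would apply Proposition \ref{prop:LB-S} to the effect operators $A_1, \ldots, A_k$ of the POVM $A$. Since $\sum_i A_i = I_d$, the identity already lies in $\operatorname{span}_{\mathbb R}\{A_1, \ldots, A_k\}$, so the parameter $n+1$ defined in the statement equals $\dim \operatorname{span}_{\mathbb R}\{A_1, \ldots, A_k, I_d\}$, matching the hypothesis of Proposition \ref{prop:LB-S}. The largest integer $r$ satisfying $d \geq (n+1)^2(r-1)$ is $r = 1 + \lfloor d/(n+1)^2 \rfloor$, and the proposition produces orthonormal vectors $x_1, \ldots, x_r \in \C^d$ together with scalars $\lambda_1, \ldots, \lambda_k \in \R$ such that $\braket{x_i | A_s x_j} = \delta_{ij} \lambda_s$. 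Taking $V : \C^r \to \C^d$ to be the isometry whose columns are the $x_j$'s gives $V^*A_sV = \lambda_s I_r$ for every $s \in [k]$, which is exactly the condition in the definition of $S(A)$.

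The third inequality is immediate from $n+1 \leq k$ (the $k$ matrices $A_i$ span a real subspace of dimension at most $k$), which gives $(n+1)^2 \leq k^2$ and hence $\lfloor d/(n+1)^2 \rfloor \geq \lfloor d/k^2 \rfloor$. The genuine difficulty of the theorem is therefore not in the argument itself but packaged inside Proposition \ref{prop:LB-S}, whose proof from \cite{li2011generalized} relies on Tverberg's theorem on partitions of point configurations. Once that combinatorial ingredient is granted, the remaining work is a short translation between the language of higher-rank joint numerical ranges and that of scalar and compatibility dimensions of POVMs, and I do not expect any additional obstacle.
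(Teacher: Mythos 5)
Your proposal is correct and follows essentially the same route as the paper: realize $S(A)$ by an isometry that trivializes $A$ (so compatibility with $V^*BV$ is automatic, here made explicit via the joint measurement $C_{ij}=\lambda_i V^*B_jV$), invoke Proposition \ref{prop:LB-S} for the middle inequality, and use $n+1\leq k$ for the last. Your extra remark that $I_d \in \operatorname{span}_{\mathbb R}\{A_i\}$ (so the two definitions of $n+1$ agree) is a welcome clarification the paper leaves implicit.
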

\begin{proof}
For any $r \leq S(A)$ (resp.~$r \leq S(B)$), there exists an isometry $V : \C^r \to \C^d$ such that the POVM $V^*AV$ (resp.~$V^*BV$) is trivial. In particular, the POVMs $V^*AV$ and $V^*BV$ are compatible, and thus $R(A,B) \geq \max(S(A), S(B))$. The second inequality in \eqref{eq:LB-scalar-POVM} follows from Proposition \ref{prop:LB-S}.
\end{proof}
\begin{remark}
	In the case where $(n+1)^2 \geq d$, the lower bound \eqref{eq:LB-scalar-POVM} is trivial. In particular, if a POVM $A$ has $k$ linearly independent effects and $k > \sqrt d$, the bound \eqref{eq:LB-scalar-POVM} is trivial. Hence, Theorem \ref{thm:reduce-trivial} is useful for POVMs with few outcomes. 
\end{remark}

\begin{example}
	Going back to the two qubit effects from Example \ref{ex:qutrit-effects}, note that the reduced POVM $(V^*FV, I_2 - V^*FV)$ is the trivial POVM $(I_2/2,I_2/2)$.
\end{example}	

\medskip

To conclude, using ideas from the theory of quantum error correction, we have given in this section two lower bounds on the compatibility dimension of a tuple of POVMs $\mathbf A$: 
\begin{itemize}
    \item a first one in terms of the \emph{commutativity dimension} $T(\mathbf A)$ of the tuple, Theorem \ref{thm:reduce-commutative};
    \item a second one in terms of the \emph{scalar dimensions} $S(A)$ and $S(B)$ of any pair POVMs $(A,B)$, see Theorem \ref{thm:reduce-trivial}. 
\end{itemize}
We would like to point out that these very general results are useful in the regime where the POVMs have few outcomes (or, rather, the span of the effect operators is low-dimensional). The results in this section cannot be applied, for example, to the cases of (noisy) orthonormal bases that were studied in Sections \ref{sec:2-bases}, \ref{sec:MUB}. 
	
\section{Dimension dependent bounds and spin systems}\label{sec:spin-systems}

We prove in this section results for isometry-independent reductions, corresponding to the notion of strong compatibility dimension from Definition \ref{def:R-Rbar}.

We recall the following compatibility criterion from \cite[Section VIII]{bluhm2018joint} and \cite[Section 7]{bluhm2020compatibility} which guarantees the compatibility of noisy versions of POVMs, with a noise parameter depending on the dimension of the Hilbert space, and independent of the number of measurements. We shall explicitly consider separately the case of 2-outcome (or dichotomic) POVMs, with the example of maximally incompatible \emph{spin system measurements} in mind. 

\begin{proposition}{\cite[Corollary VIII.4]{bluhm2018joint} and \cite[Theorem 7.1]{bluhm2020compatibility}}\label{prop:comp-noise-dimension}
Let $A^{(1)}, \ldots, A^{(g)}$ be $g$ arbitrary 2-outcome POVMs on $\M_d$. Then, their noisy versions $\tilde A^{(x)}$ are compatible, where 
\begin{equation}\label{eq:dim-comp-crit-effects}
\tilde A^{(x)}_i = \mathcal N_{1/(2d)}[A^{(x)}]_i = \frac{1}{2d} A^{(x)}_i + \left(1-\frac{1}{2d}\right)\frac{I_d}{2}.
\end{equation}
More generally, consider a $g$-tuple $(B^{(x)})_{x=1}^g$, where $B^{(x)}$ is a $k_x$-valued POVM on $\mathbb C^d$. Then, their noisy versions $\tilde B^{(x)}$ are compatible, where 
\begin{equation}\label{eq:dim-comp-crit-general}
\tilde B^{(x)}_i = \mathcal N_{1/(2d(k_x-1)}[B^{(x)}]_i = \frac{1}{2d(k_x-1)} B^{(x)}_i + \left(1-\frac{1}{2d(k_x-1)}\right)\frac{I_d}{k_x}.
\end{equation}
\end{proposition}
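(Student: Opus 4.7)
The plan is to exhibit an explicit joint POVM via a measure-and-prepare scheme on Haar-random pure states of $\C^d$, composed with a carefully chosen affine classical post-processing. Let $d\psi$ denote the uniform probability measure on pure states of $\C^d$, so that $\int d\psi\,\ketbra{\psi}{\psi} = I_d/d$ and, by the standard second-moment identity,
$$\int d\psi\,\ketbra{\psi}{\psi}\,\langle\psi|X|\psi\rangle = \frac{X + \Tr(X)\,I_d}{d(d+1)}, \qquad \forall X\in\M_d.$$
Setting $s_x := 1/(2d(k_x-1))$, the proposed joint measurement is
$$C_{\mathbf i} := d\int d\psi\,\ketbra{\psi}{\psi}\prod_{x=1}^g P(i_x\mid\psi,x),$$
with the conditional probability $P(i\mid\psi,x) := \frac{1-s_x}{k_x} - s_x\Tr B^{(x)}_i + s_x(d+1)\langle\psi|B^{(x)}_i|\psi\rangle$, affine in $\langle\psi|B^{(x)}_i|\psi\rangle$. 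The normalization $\sum_i P(i\mid\psi,x) = 1$ is immediate from $\sum_i B^{(x)}_i = I_d$ and $\Tr I_d = d$.

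Computing the $x$-marginal of $C$ reduces, after summing over $\mathbf i_{-x}$ against the normalizations of the other factors, to $d\int d\psi\,\ketbra{\psi}{\psi}\,P(i\mid\psi,x)$, which the second-moment identity evaluates to $s_x B^{(x)}_i + (1-s_x)I_d/k_x = \tilde B^{(x)}_i$; the two $\Tr B^{(x)}_i$ contributions cancel exactly. Positivity of every $C_{\mathbf i}$ is inherited from the integrand, and $\sum_{\mathbf i}C_{\mathbf i} = I_d$ follows from the two normalizations. So once the post-processing $P$ is a genuine probability, compatibility is proved.

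The main (and essentially only) obstacle is therefore to verify that $P(i\mid\psi,x)\in[0,1]$ for all admissible $\psi$, $x$, $i$. Writing $\mu := \langle\psi|B^{(x)}_i|\psi\rangle$ and $T := \Tr B^{(x)}_i$, this reduces to controlling $(d+1)\mu - T$. Since $B^{(x)}_i$ is an effect with eigenvalues $\lambda_1\leq\cdots\leq\lambda_d$ in $[0,1]$ and $\mu\in[\lambda_1,\lambda_d]$, a short extremization shows $(d+1)\mu - T\in[-(d-1),d]$, the lower end attained at $B^{(x)}_i = I_d - \ketbra{\psi}{\psi}$ and the upper end at $B^{(x)}_i = \ketbra{\psi}{\psi}$. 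The two resulting scalar constraints $s_x\leq 1/(1+k_x(d-1))$ and $s_x\leq (k_x-1)/(k_xd-1)$ reduce, for the prescribed $s_x = 1/(2d(k_x-1))$, to the elementary facts $k_x(d+1)\geq 2d+1$ and $2dk_x^2 - 5dk_x + 2d + 1\geq 0$, both of which hold for every $k_x\geq 2$ and $d\geq 1$. The dichotomic case $k_x = 2$ recovers the first formula of the statement (and the same construction can equivalently be written using the observable-form parametrization $A^{(x)}_0 = (I_d + E^{(x)})/2$ of 2-outcome POVMs).
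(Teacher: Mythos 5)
Your proof is correct, and it takes a genuinely different route from the one the paper relies on: the paper does not prove Proposition \ref{prop:comp-noise-dimension} at all, but imports it from \cite[Corollary VIII.4]{bluhm2018joint} and \cite[Theorem 7.1]{bluhm2020compatibility}, where the statement is deduced from lower bounds on inclusion constants for free spectrahedra (the matrix diamond for dichotomic POVMs, the matrix jewel in general), compatibility of the noisy tuple being equivalent to a spectrahedral inclusion. You instead exhibit an explicit joint measurement: measure the covariant POVM $d\,\ketbra{\psi}{\psi}\,d\psi$ and post-process each outcome with the affine probabilities $P(i\mid\psi,x)$. I checked the key steps: the normalization of $P$, the marginal computation via the second-moment identity $\int d\psi\,\ketbra{\psi}{\psi}\langle\psi|X|\psi\rangle=\frac{X+\Tr(X)I_d}{d(d+1)}$ (the $\Tr B^{(x)}_i$ terms indeed cancel, leaving $s_xB^{(x)}_i+(1-s_x)I_d/k_x$), the extremization $(d+1)\mu-\Tr B^{(x)}_i\in[-(d-1),d]$, and the two scalar inequalities at $s_x=1/(2d(k_x-1))$; all are correct (note that only $P\geq 0$ is actually needed, since $P\leq 1$ follows from nonnegativity plus normalization). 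Your approach buys elementarity and transparency: a single ``master'' measurement with independent classical post-processings makes the independence of the noise level from $g$ manifest, and your constraints in fact allow the slightly larger value $s_x=\min\bigl\{1/(1+k_x(d-1)),\,(k_x-1)/(dk_x-1)\bigr\}$ (e.g.\ $1/(2d-1)$ instead of $1/(2d)$ for dichotomic POVMs), which implies the stated bound since compatibility is monotone under adding uniform noise. What the spectrahedral route buys, by contrast, is the structural picture exploited elsewhere in that line of work (connections to the strong compatibility dimension, asymmetric noise regions, and optimality discussions), which your construction does not by itself provide.
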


This compatibility criterion is of particular interest in the setting of our work, given the dimension dependence of the noise parameters in the equations \eqref{eq:dim-comp-crit-effects} and \eqref{eq:dim-comp-crit-general}. Note that for small values of $g$, the compatibility result above can be seen to follow from other type of arguments, such as cloning \cite{heinosaari2014maximally}. We obtain the following universal lower bound on the quantity $\bar R(\cdot)$ from Definition \ref{def:R-Rbar}, giving thus the first lower bound on the strong compatibility dimension.
\begin{theorem}\label{thm:bound-R-bar-dimension-dependent}
Let $\mathbf A = (A^{(1)}, \ldots, A^{(g)})$ be a $g$-tuple of 2-outcome POVMs on $\M_d$. Then, for all $1 \leq r \leq d$ and $t \in [0, 1/(2r)]$, we have $\bar R(\mathcal N_t[\mathbf A]) \geq r$. 

More generally, consider a $g$-tuple $\mathbf B = (B^{(1)}, \ldots, B^{(g)})$, where $B^{(x)}$ is a $k_x$-valued POVM on $\mathbb C^d$. Then, for all $1 \leq r \leq d$ and $\mathbf t \in [0,1]^g$ such that $t_x \leq 1/(2r(k_x-1))$, we have $\bar R(\mathcal N_{\mathbf t}[\mathbf B]) \geq r$.
\end{theorem}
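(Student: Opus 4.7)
The plan is to directly chain together Proposition \ref{prop:comp-noise-dimension} and Lemma \ref{lem:noise-isometry}, glued by the observation that compatibility is preserved under adding extra uniform noise. Because the statement concerns $\bar R$, I must verify the conclusion for \emph{every} isometry $V:\C^r \to \C^d$, but the argument below will be uniform in $V$.

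First, I would fix an arbitrary isometry $V:\C^r\to\C^d$ and apply Lemma \ref{lem:noise-isometry} to commute the noise through the reduction:
$$V^*\mathcal N_t[A^{(x)}]V = \mathcal N_t[V^*A^{(x)}V], \qquad x\in[g].$$
The tuple $(V^*A^{(x)}V)_{x=1}^g$ consists of 2-outcome POVMs on $\M_r$, so Proposition \ref{prop:comp-noise-dimension} applied in dimension $r$ (not $d$!) says that $\bigl(\mathcal N_{1/(2r)}[V^*A^{(x)}V]\bigr)_x$ is compatible. It thus remains to push the noise from $1/(2r)$ down to any $t\leq 1/(2r)$.

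For the latter, I would record the elementary semigroup identity $\mathcal N_{s'}\circ\mathcal N_s=\mathcal N_{ss'}$ (a direct check from the definition) together with the fact that compatibility is preserved under the noise operation: if $(P^{(x)})_x$ has joint POVM $C$ on $[k_1]\times\cdots\times[k_g]$, then
$$C'_{\mathbf i} := s\,C_{\mathbf i} + (1-s)\prod_{x=1}^g\frac{1}{k_x}\,I_r$$
is a joint POVM whose $x$-th marginal is exactly $\mathcal N_s[P^{(x)}]$. Writing $t = (2rt)\cdot(1/(2r))$ with $2rt\in[0,1]$, the semigroup identity and this monotonicity lemma give the compatibility of $\mathcal N_t[V^*A^{(x)}V]=\mathcal N_{2rt}\bigl[\mathcal N_{1/(2r)}[V^*A^{(x)}V]\bigr]$. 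Since $V$ was arbitrary, Definition \ref{def:R-Rbar} yields $\bar R(\mathcal N_t[\mathbf A])\geq r$.

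The general (non-dichotomic) case is proved by exactly the same scheme, replacing Proposition \ref{prop:comp-noise-dimension}'s dichotomic statement by its general form on $\M_r$: the tuple $(\mathcal N_{1/(2r(k_x-1))}[V^*B^{(x)}V])_x$ is compatible, and the componentwise monotonicity (apply $\mathcal N_{2r(k_x-1)t_x}$ in the $x$-th slot, noting that the product-of-uniforms noise argument above works factorwise) extends this to any $\mathbf t$ with $t_x\leq 1/(2r(k_x-1))$. I do not anticipate any real obstacle: all heavy lifting is done by Proposition \ref{prop:comp-noise-dimension}, and the only subtlety is to apply it in dimension $r$ (after reduction), rather than in the ambient dimension $d$, and to remember that the noise parameter from Proposition \ref{prop:comp-noise-dimension} \emph{sharpens} as the Hilbert space dimension decreases, which is precisely what makes dimension reduction help.
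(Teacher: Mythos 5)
Your proposal is correct and follows essentially the same route as the paper: fix an arbitrary isometry $V$, commute the noise through the reduction via Lemma \ref{lem:noise-isometry}, and apply Proposition \ref{prop:comp-noise-dimension} in the reduced dimension $r$. The only difference is that you make explicit, via the semigroup identity $\mathcal N_{s'}\circ\mathcal N_s=\mathcal N_{ss'}$ and the joint-POVM construction for added uniform noise, the monotonicity step taking the noise level from the critical value $1/(2r(k_x-1))$ down to any smaller $t_x$, which the paper absorbs implicitly into ``the condition on the vector $\mathbf t$''.
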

\begin{proof}
Let us prove the more general statement about the $g$-tuple $\mathbf B$. Fix an integer $r$ and a vector $\mathbf t$ as in the statement. Consider also an arbitrary isometry $V : \mathbb C^r \to \mathbb C^d$. From Lemma \ref{lem:noise-isometry}, we have that, for all $x \in [g]$,
$$V^*\mathcal N_{t_x} [B^{(x)}]V = \mathcal N_{t_x} [V^*B^{(x)}V].$$
Using Proposition \ref{prop:comp-noise-dimension} and the condition on the vector $\mathbf t$, we infer that the POVMs $\mathcal N_{\mathbf t}[V^*\mathbf B V]$ are compatible, proving the claim.
\end{proof}

\medskip

Let us now use the previous result to obtain bounds on the strong compatibility dimension of spin system measurements, which we introduce next. From a physical point of view \cite[Section 5.4]{weinberg}, it was discovered by Dirac that the spin property appears naturally in his equation when he was searching for a relativistic quantum equation of electrons. In his equation the Clifford algebra appears as a particular representation of the homogeneous Lorentz group. Since this representation contains naturally the spin one-half representation described by the Pauli matrices, his equation presents the conceptual and the natural description of the spin as a fundamental property. Mathematically, spin systems are sets of anti-commuting, self-adjoint, unitary operators. The paradigmatic example of such operators are the Pauli matrices $\sigma_{X,Y,Z} \in \mathcal M_2(\mathbb C)$. Higher level spin systems are defined recursively, as follows. At level $k=0$, we have a single matrix,  $$F^{(0)}_1 := [1] \in \mathcal M_1(\mathbb C).$$
At level $k = 1$, we have the Pauli matrices:
$$F^{(1)}_1 = \sigma_X = \begin{bmatrix}
0 & 1 \\ 1 & 0
\end{bmatrix}, \quad F^{(1)}_2 = \sigma_Y = \begin{bmatrix}
0 & -\text{i} \\ \text{i} & 0
\end{bmatrix} \quad \text{ and } \quad F^{(1)}_3 = \sigma_Z = \begin{bmatrix}
1 & 0 \\ 0 & -1
\end{bmatrix}.$$
For larger levels, define recursively the matrices of size $2^{k+1}$
$$F^{(k+1)}_i = \sigma_X \otimes F^{(k)}_i~\forall i \in [2k+1] \quad \text{ and } \quad F^{(k+1)}_{2k+2} = \sigma_Y \otimes I_{2^{k}}, \quad F^{(k+1)}_{2k+3} = \sigma_Z \otimes I_{2^{k}}.$$

For example, at level 2, we have the five matrices 
$$F^{(2)}_1 = \sigma_X \otimes \sigma_X, \qquad F^{(2)}_2 = \sigma_X \otimes \sigma_Y, \qquad F^{(2)}_3 = \sigma_X \otimes \sigma_Z, \qquad F^{(2)}_4 = \sigma_Y \otimes I_2, \qquad F^{(2)}_5 = \sigma_Z \otimes I_2.$$

From the $2k+1$ matrices at level $k$, we construct $2k+1$ dichotomic POVMs
$$A^{(x)}_1 = (I_{2^{k+1}}+F_x)/2 \qquad A^{(x)}_2 = (I_{2^{k+1}}-F_x)/2, \qquad x \in [2k+1].$$

We recall the following result from \cite{bluhm2018joint} regarding the noise robustness of the tuple $\mathbf A = (A^{(x)})_{x\in [2k+1]}$; note that the same result was derived in the symmetric case in \cite{kunjwal2014quantum}.

\begin{proposition}\cite[Section VIII.B]{bluhm2018joint}\label{prop:comp-spin-system}
For every $k \geq 1$, the $(2k+1)$-tuple of 2-outcome POVMs $\mathcal N_t[\mathbf A]$ acting on $\mathbb C^{2^{k+1}}$ is compatible if and only if $\|t\|_2 \leq 1$.
\end{proposition}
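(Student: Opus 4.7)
The plan is to prove both directions by exploiting the Clifford structure $\{F_x, F_y\} = 2\delta_{xy} I$ of the spin generators on $\C^d$ with $d = 2^{k+1}$, and the resulting spectral calculus, without invoking cloning-type criteria.

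For the sufficient direction ($\|t\|_2 \leq 1 \Rightarrow$ joint measurability), I would exhibit the explicit joint POVM indexed by $\mathbf i \in \{\pm 1\}^{2k+1}$,
\[
    C_{\mathbf i} := \frac{1}{2^{2k+1}}\left(I + \sum_{x=1}^{2k+1} i_x t_x F_x\right).
\]
The correct marginals $\sum_{\mathbf i: i_x = \pm 1} C_{\mathbf i} = (I \pm t_x F_x)/2 = \mathcal{N}_{t_x}[A^{(x)}]_{\pm}$ follow from the elementary counts $\sum_{\mathbf i: i_x = \epsilon} 1 = 2^{2k}$ and $\sum_{\mathbf i: i_x = \epsilon} i_y = 2^{2k} \epsilon \delta_{xy}$. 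For positivity, anti-commutation yields $(\sum_x i_x t_x F_x)^2 = \sum_x t_x^2 I = \|t\|_2^2 I$, so the Hermitian operator $\sum_x i_x t_x F_x$ has spectrum $\{\pm \|t\|_2\}$; consequently $C_{\mathbf i} \geq 0$ precisely when $\|t\|_2 \leq 1$.

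For the necessary direction (joint measurability $\Rightarrow \|t\|_2 \leq 1$), the plan is a Bloch-state witness argument. Suppose $(C_{\mathbf i})_{\mathbf i}$ is any joint POVM for $\mathcal{N}_t[\mathbf A]$, and introduce the linear Clifford coefficients
\[
    a(\mathbf i) := \frac{\Tr C_{\mathbf i}}{d}, \qquad b_x(\mathbf i) := \frac{\Tr(F_x C_{\mathbf i})}{d} \qquad (x \in [2k+1]).
\]
For each unit vector $\beta \in \R^{2k+1}$, the operator $M_\beta := \sum_x \beta_x F_x$ satisfies $M_\beta^2 = I$, so $\rho_\beta := (I + M_\beta)/d$ is a density matrix (PSD of unit trace, using that each $F_x$ is traceless). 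Probing positivity of $C_{\mathbf i}$ against $\rho_\beta$, and using $\Tr(F_x F_y) = d \delta_{xy}$, gives
\[
    0 \leq \Tr(\rho_\beta\, C_{\mathbf i}) = a(\mathbf i) + \langle b(\mathbf i), \beta \rangle \qquad \forall\, \beta \in \R^{2k+1},\ \|\beta\|_2 = 1,
\]
which, minimized over $\beta$, yields the Bloch-ball bound $\|b(\mathbf i)\|_2 \leq a(\mathbf i)$. The marginal identities, projected via $\Tr(F_y \cdot)/d$, read $\sum_{\mathbf i} a(\mathbf i) = 1$ and $\sum_{\mathbf i} i_x b_y(\mathbf i) = t_x \delta_{xy}$. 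Combining these, for any $\alpha \in \R^{2k+1}$ with $\|\alpha\|_2 = 1$,
\[
    \langle \alpha, t\rangle \;=\; \sum_{\mathbf i} \langle \alpha \odot \mathbf i,\, b(\mathbf i) \rangle \;\leq\; \sum_{\mathbf i} \|\alpha\|_2\, \|b(\mathbf i)\|_2 \;\leq\; \sum_{\mathbf i} a(\mathbf i) \;=\; 1,
\]
using Cauchy--Schwarz in the middle step (since $\|\alpha \odot \mathbf i\|_2 = \|\alpha\|_2$) and the Bloch-ball bound in the last. Taking the supremum over unit $\alpha$ gives $\|t\|_2 \leq 1$.

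The delicate step is the necessity argument, whose punchline is that probing a general joint POVM with the Bloch states $\rho_\beta$ generates precisely the $\ell^2$ constraint $\|b(\mathbf i)\|_2 \leq a(\mathbf i)$ which, paired with the linear marginal identity $t_x = \sum_{\mathbf i} i_x b_x(\mathbf i)$, matches up via Cauchy--Schwarz to produce the sharp bound. Crucially, the higher-order Clifford components of $C_{\mathbf i}$ (those corresponding to products $F_{x_1} F_{x_2} \cdots$ of two or more generators) drop out of the trace $\Tr(\rho_\beta C_{\mathbf i})$ because $\rho_\beta$ lies in the trace-orthogonal span of $\{I, F_1, \ldots, F_{2k+1}\}$. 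This obviates any Pin-group twirling or explicit symmetrization of $C_{\mathbf i}$, and streamlines the proof considerably.
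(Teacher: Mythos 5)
Your proof is correct, and it is a genuinely different route from the paper's: the paper does not prove this statement at all, but imports it from \cite[Section VIII.B]{bluhm2018joint}, where the compatibility region of the spin-system POVMs is identified with the unit ball of $\ell_2^{2k+1}$ via the correspondence between joint measurability of dichotomic POVMs and inclusions of free spectrahedra (the matrix diamond machinery), with the symmetric case also available from \cite{kunjwal2014quantum}. Your argument is self-contained and elementary. For sufficiency, the explicit joint POVM $C_{\mathbf i}=2^{-(2k+1)}\bigl(I+\sum_x i_x t_x F_x\bigr)$ works exactly as you say: the marginal computation is a parity count, and positivity follows from $\bigl(\sum_x i_x t_x F_x\bigr)^2=\|t\|_2^2\, I$, which needs only the Clifford relations. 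For necessity, your Bloch-witness argument is sound: the $F_x$ are traceless and pairwise trace-orthogonal (tracelessness is preserved by the recursive construction, and $\Tr(F_xF_y)=d\,\delta_{xy}$ follows from anticommutation plus unitarity), so $\rho_\beta=(I+\sum_x\beta_x F_x)/d$ is a bona fide state for every unit $\beta$, the coefficients $a(\mathbf i), b_x(\mathbf i)$ are real, and probing positivity of each $C_{\mathbf i}$ against all $\rho_\beta$ gives $\|b(\mathbf i)\|_2\le a(\mathbf i)$; combined with $t_x=\sum_{\mathbf i} i_x b_x(\mathbf i)$, $\sum_{\mathbf i}a(\mathbf i)=1$ and Cauchy--Schwarz this yields $\|t\|_2\le 1$, and you correctly observe that higher-order Clifford components of $C_{\mathbf i}$ never enter, so no twirling or symmetrization is needed. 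What each approach buys: the spectrahedral route of \cite{bluhm2018joint} places the result inside a general theory (inclusion constants, arbitrary numbers of effects, links to $\bar R$), whereas your proof is shorter, uses nothing beyond the anticommutation relations, and makes the extremality of the spin system transparent; it would even fit as an appendix-level proof in this paper. Two cosmetic points: the identity $\sum_{\mathbf i}a(\mathbf i)=1$ comes from the normalization $\sum_{\mathbf i}C_{\mathbf i}=I$ rather than from projecting with $F_y$, and in the sufficiency step only the implication $\|t\|_2\le 1\Rightarrow C_{\mathbf i}\ge 0$ is needed (the ``precisely when'' is harmless but superfluous).
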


Combining the previous result with Theorem \ref{thm:bound-R-bar-dimension-dependent}, we obtain the following result, stating that, for appropriate noise parameters, the strong compatibility dimension of a noisy spin system POVM is neither 1 nor maximal. In other words, the noisy spin system POVMs are not compatible, but all reductions to a non-trivial fixed dimension become compatible. 

\begin{proposition}
For any $r \geq 2$, $k \geq 2r^2+1 \geq 9$, and all $t \in (1/\sqrt{2k+1},1/(2r)]$, the spin system POVMs $\mathbf A$ at level $k$ satisfy
$$ r \leq \bar R(\mathcal N_t[\mathbf A]) \leq 2^{k+1}-1.$$
\end{proposition}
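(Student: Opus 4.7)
The plan is to obtain the two inequalities separately; both follow quite directly from results already established in the paper, once one verifies that the hypothesized range of the noise parameter $t$ is indeed non-empty.

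For the lower bound $r \leq \bar R(\mathcal N_t[\mathbf A])$, I would simply invoke Theorem \ref{thm:bound-R-bar-dimension-dependent}. The tuple $\mathbf A = (A^{(x)})_{x \in [2k+1]}$ consists of $g = 2k+1$ two-outcome POVMs acting on $\mathbb C^{2^{k+1}}$, so the hypothesis $t \leq 1/(2r)$ of that theorem is exactly the upper endpoint of the given interval. The conclusion $\bar R(\mathcal N_t[\mathbf A]) \geq r$ then follows immediately.

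For the upper bound $\bar R(\mathcal N_t[\mathbf A]) \leq 2^{k+1} - 1$, I would argue that the full-dimensional tuple is \emph{incompatible}, which forces $\bar R < d = 2^{k+1}$. Proposition \ref{prop:comp-spin-system} gives the sharp criterion $\|\mathbf t\|_2 \leq 1$ for the compatibility of $\mathcal N_{\mathbf t}[\mathbf A]$; in the symmetric case $\mathbf t = (t,\ldots,t) \in \mathbb R^{2k+1}$, this becomes $t\sqrt{2k+1} \leq 1$. The hypothesis $t > 1/\sqrt{2k+1}$ thus certifies incompatibility of $\mathcal N_t[\mathbf A]$ on the full Hilbert space $\mathbb C^{2^{k+1}}$, which by the very definition of $\bar R$ (or even of $R$) yields the stated upper bound.

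It only remains to check that the interval $(1/\sqrt{2k+1}, 1/(2r)]$ is non-empty under the stated assumptions. The inequality $1/\sqrt{2k+1} < 1/(2r)$ is equivalent to $4r^2 < 2k+1$, which is satisfied whenever $k \geq 2r^2 + 1$, since then $2k+1 \geq 4r^2 + 3 > 4r^2$. The secondary constraint $k \geq 9$ follows automatically from $r \geq 2$ via $2r^2+1 \geq 9$, so the assumptions are self-consistent and the range is genuinely non-trivial.

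There is essentially no hard step here: this proposition is a clean packaging of the dimension-dependent compatibility criterion (Theorem \ref{thm:bound-R-bar-dimension-dependent}) together with the tight characterization of spin-system incompatibility (Proposition \ref{prop:comp-spin-system}). The only mildly subtle point is matching the symmetric spin-system threshold $1/\sqrt{2k+1}$ against the generic reduction threshold $1/(2r)$ and observing that $k \geq 2r^2+1$ is exactly the regime where the former is strictly smaller than the latter, leaving a window of noise parameters in which the tuple is incompatible yet every $r$-dimensional restriction is compatible.
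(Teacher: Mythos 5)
Your argument is correct and follows exactly the paper's own proof: the lower bound comes from Theorem \ref{thm:bound-R-bar-dimension-dependent} via $t \leq 1/(2r)$, the upper bound from incompatibility on the full space via Proposition \ref{prop:comp-spin-system} and $t\sqrt{2k+1} > 1$, and the condition $k \geq 2r^2+1$ guarantees the interval of admissible $t$ is non-empty. Your explicit verification of the non-emptiness is a small elaboration of what the paper states in one line, but the route is identical.
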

\begin{proof}
The statement about compatibility follows from $t \leq 1/(2r)$ and Theorem \ref{thm:bound-R-bar-dimension-dependent}. The incompatibility statement follows from Proposition \ref{prop:comp-spin-system} and 
$$t > \frac{1}{\sqrt{2k+1}} \implies \|t(\underbrace{1,1,\ldots, 1}_{2k+1 \text{ times}})\|_2 >1.$$
The inequality between $k$ and $r$ ensures the existence of noise parameters for which the interval in the statement is non-empty. 
\end{proof}

\section{Conclusion}
In this paper, we have introduced a new measure of the incompatibility of a pair (or a tuple) of quantum measurements. The \emph{compatibility dimension} of a set of POVMs is the maximal dimension of a Hilbert space to which the restrictions of the given measurements are compatible. A related notion, that of the \emph{strong} compatibility dimension is defined in a similar manner, but requiring that the restrictions to  \emph{all} Hilbert subspaces of that given dimension are compatible. 

We then proceed to analyze the properties of these quantities, relating them to (in-)compatibility criteria. We study several examples in details, such as pairs of von Neumann measurements and mutually unbiased bases. We also provide lower bounds for these quantities using constructions inspired from the theory of error correcting codes. 

Several questions are left open. Importantly, good upper bounds on the (strong) compatibility dimensions are lacking. One would equally like to compute exactly these dimensions in very simple cases, such as the measurements in the computational basis and the one in the Fourier basis. The optimality of the algebraic techniques used in Sections \ref{sec:2-bases} (the quantity $\mathcal Z(U)$) and \ref{sec:algebraic} is also left open.  

\bigskip

\noindent\textit{Acknowledgements.} We would like to thank Andreas Bluhm, Sébastien Designolle, and Teiko Heinosaari for useful remarks and comments on a preliminary version of this paper. We would also like to thank the anonymous referee, who has helped us vastly improve the presentation of the paper. 

\bigskip

\noindent\textit{Data Availability: } data sharing not applicable --- no new data generated.
\bigskip

\printbibliography

\end{document}